\newtheorem{thm}{Theorem}
\newtheorem{lemma}{Lemma}
\newtheorem{defn}{Definition}
\newcommand{\DD}{\mathcal D}
\newcommand{\FF}{\mathcal F}
\renewcommand{\P}{\mathbb P}
\newcommand{\R}{\mathbb R}
\newcommand{\RR}{\mathcal R}
\newcommand{\Z}{\mathbb Z}
\newcommand{\iso}{\cong}
\newcommand{\inv}{^{-1}}
\renewcommand{\phi}{\varphi}
\newcommand{\Cech}{\v{C}ech }
\newcommand{\HP}{\HP \textrm{P}}
\newcommand{\low}{\textrm{low}}
\newcommand{\Lk}{\textrm{Lk}}
\newcommand{\St}{\textrm{St}}
\newcommand{\RF}{\mathcal{R},\mathcal{F}}
\newcommand{\Ch}{\textrm{Ch}}
\newcommand{\norm}[1]{\| #1 \|}
\newcommand{\mindeath}{d}
\newcommand{\death}{r}
\newcommand{\sizeset}{k}
\newcommand{\chiint}{n}
\title{Failure Filtrations for Fenced Sensor Networks
\footnote{Research is partially supported by NSF under grants 
%Reed RTG:  
NSF-DMS-09-43760 and
%Harer RTG: 
NSF-DMS-10-45153, 
AFOSR  under grant  FA9550-10-1-0436, 
and
NIH under grant K25-AI079404.
}}
\author{Elizabeth Munch\footnote{Department of Mathematics, Duke University, Durham, NC},  Michael Shapiro\footnote{Department of Pathology, Tufts University, Boston, MA}, John Harer\footnote{Departments of Mathematics, Computer Science, and Electrical and Computer Engineering, Program in Computational Biology and Bioinformatics, Duke University, Durham, NC}}
\date{\today}
\begin{document}
\maketitle

%\tableofcontents
%\noindent
%%%%%%%%%%%%%%%%%%%%%%%%%%%%%%%%%%%%%%%%%%%%%%%%%%%%%%%
%%%%%%%%%%%%%%%%%%%%%%%%%%%%%%%%%%%%%%%%%%%%%%%%%%%%%%%
\begin{abstract}

In this paper we consider the question of sensor network coverage for  a 2-dimensional domain.
We seek to compute the probability  that a set of sensors fails to cover given only
non-metric, local (who is talking to whom) information
and a probability distribution of failure of each node.
This builds on the work of de Silva and Ghrist who analyzed this
problem in the deterministic situation.
We first show that a it is part of a slightly larger class of problems which is \#P-complete, and thus fast algorithms likely do not exist unless P$=$NP.
We then give a deterministic algorithm which is feasible in the case of a small set of sensors,
and give a dynamic algorithm for an arbitrary set of sensors failing over time which utilizes a new criterion for coverage based on the one proposed by de Silva and Ghrist.
These algorithms build on the theory of topological persistence
\cite{Edelsbrunner2010}.

\end{abstract}
%%%%%%%%%%%%%%%%%%%%%%%%%%%%%%%%%%%%%%%%%%%%%%%%%%%%%%%
%%%%%%%%%%%%%%%%%%%%%%%%%%%%%%%%%%%%%%%%%%%%%%%%%%%%%%%

\section{Introduction}

%%%%%%%%%%%%%%%%%%%%%%%%%%%%%%%%%%%%%%%%%%%%%%%%%%%%%%%
%%%%%%%%%%%%%%%%%%%%%%%%%%%%%%%%%%%%%%%%%%%%%%%%%%%%%%%

The newly emerged field of Computational Topology \cite{Edelsbrunner2010} continues to find ever
increasing areas of application.
Perhaps its most significant application so far has been in the use of topological data analysis (TDA) on a wide variety
of datasets   \cite{Edelsbrunner2010} \cite{Carlsson2009} \cite{Chazal2009},
and has also been used effectively to find
structure in images \cite{Carlsson2008}\cite{Edelsbrunner2009},
shape in proteins and protein complexes \cite{Agarwal2006}\cite{Ban2004}\cite{Headd2007}
and in many other areas.
Recently, it was applied to sensor networks in \cite{Ghrist2005}, \cite{DeSilva2006}, \cite{DeSilva2007},\cite{Tahbaz-Salehi2010}
and the current paper is an extension of \cite{DeSilva2006}. 

Topology enters the study of sensor network when we consider questions like coverage.
When does a set of sensors effectively monitor a region and when are there gaps?
Phrasing this geometrically,
we start with a set of sensors $\chi$ in a domain $\Delta \subset \mathbb{R}^2$
where each can detect objects in a circular region of fixed radius $r_c$,
and we ask if the union of these discs covers all of $\Delta$.
This problem has been studied quite a bit,
but previous to \cite{DeSilva2006}, most work fell into one of two groups -
approaches that utilized geometric analysis to obtain an exact answer
and those that sought a non-deterministic approximation but assumed
significant capabilities of the sensors.
For a survey of the literature, see \cite{Yick2008}.

The former approach
requires a great deal of prior knowledge about the geometry of the domain and the exact location of the sensors.
The latter,
does not require this exactness, but often requires a uniform distribution of nodes or a high level
of intelligence in the
sensors.
The main contribution of  \cite{DeSilva2006} was a criterion for
coverage that  requires none of these things.

In the current paper, we take a middle ground and address  the question of computing  the probability of failure of
the criterion of \cite{DeSilva2006} given the probability of failure of each sensor.
We  show that a computing the probability of failure for a generalized set of complexes is   NP-hard, 
but we give an algorithm which can be used to solve small instances of the problem,
and an alternative, dynamic algorithm to give an early warning of potential failure.

\subsubsection*{Outline. }
The paper is organized as follows.
Section \ref{S: Persistent Homology} gives an introduction to Rips complexes and persistent homology,
both of which will be used extensively in this paper.
In Section \ref{S: Coverage Criterion}, we  summarize the problem and results of \cite{DeSilva2006}.
Section \ref{S: Sensor Failure} adds the assumption that sensors have a probability of failure, and Section
\ref{S: Complexity} discusses the complexity issues of determining the probability that there is coverage of
the domain.
Section \ref{S: Deterministic Algorithm} presents a deterministic algorithm for those times when
the set of sensors is small enough, and Section \ref{S: Monitored System} gives a
dynamic algorithm for use when the set of sensors is too large.

%%%%%%%%%%%%%%%%%%%%%%%%%%%%%%%%%%%%%%%%%%%%
%%%%%%%%%%%%%%%%%%%%%%%%%%%%%%%%%%%%%%%%%%%%
\section{Rips Complexes and Persistent Homology}\label{S: Persistent Homology}

Let $\chi$ be a set of points in $\R^2$ and suppose that $r>0$ is given.
We are interested in the topology of $W_r(\chi)$, the union of balls of radius $r$ about the points
of $\chi$.
One can build a variety of complexes with vertices $\chi$ that capture this topology,
the simplest and most intuitive is the \Cech complex which has a simplex
$<v_0, \ldots, v_k>$ whenever the balls of radius $r/2$ about the $v_i$ have a non-trivial
intersection.
The nerve lemma tells us that \Cech does indeed capture the topology of $W_r(\chi)$,
but it can be difficult to compute.
In particular, it requires one to know the exact location of the points of $\chi$, a luxury
that we do not have in this case.
We must therefore use an approximation known as the Rips complex for the problem
at hand.

The Rips complex $\RR$  has a $d$-dimensional simplex $\sigma$
whenever $\norm{v_i-v_j}<r$ for every pair of vertices $v_i,v_j \in \sigma$.
Unfortunately, the Rips complex does not retain the homotopy type of $W_r(\chi)$,
but what is lost in topological data is made up for in ease of computation.
Furthermore, the only  information necessary
to build the Rips complex is the set of pairs of points whose distance is below a the prescribed threshold $r$.

Since we will be considering filtrations of our simplicial complex and looking at how the topology
changes with the change of the simplicial complex, a brief review of persistent homology is in order \cite{Edelsbrunner2010}.

We begin with a filtration of a simplicial complex $\RR$, given by a series of inclusions
% \begin{equation*}
% 	\xymatrix{
% 	\emptyset \ar@{=}[d] &&& \RR \ar@{=}[d]   \\
% 	K_0 \ar[r] & K_1  \ar[r] & \cdots  \ar[r] & K_n
% 	}
% \end{equation*}
\begin{center}
 \includegraphics[scale=.85]{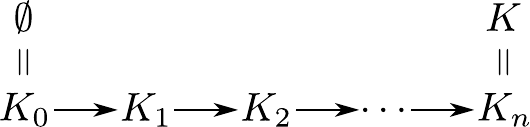}
\end{center}
which induces maps on homology

\begin{center}
 \includegraphics[scale=.85]{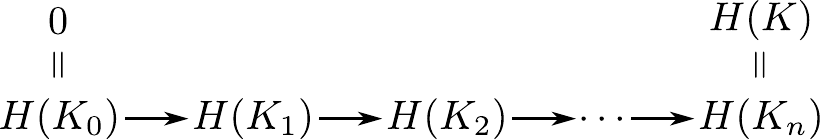}
\end{center}
% \begin{equation*}
 %	\xymatrix{
%	0 \ar@{=}[d] &&& H_p(\RR) \ar@{=}[d]   \\
% 	H_p(K_0) \ar[r] & H_p(K_1)  \ar[r]  & \cdots  \ar[r] & H_p(K_n)
% 	}
% \end{equation*}
We will use homology with $\Z_2$ coefficients for the entirety of the paper.
Consider how the sequence of homology groups changes as the simplicial complex changes.
A class $[\alpha] \in H_p(K_i)$ is {\em born} at $K_i$ if it is not in the image of the map
$H_p(K_{i-1}) \to H_p(K_i)$.
This class {\em dies} entering $K_j$ if once there it merges with an older class,
i.e. lies in the image of the map $H_p(K_{i-1}) \to H_p(K_j)$.
See Figure \ref{F:BirthDeathClass}.

\begin{figure}
 \begin{center}
\includegraphics[scale = .8]{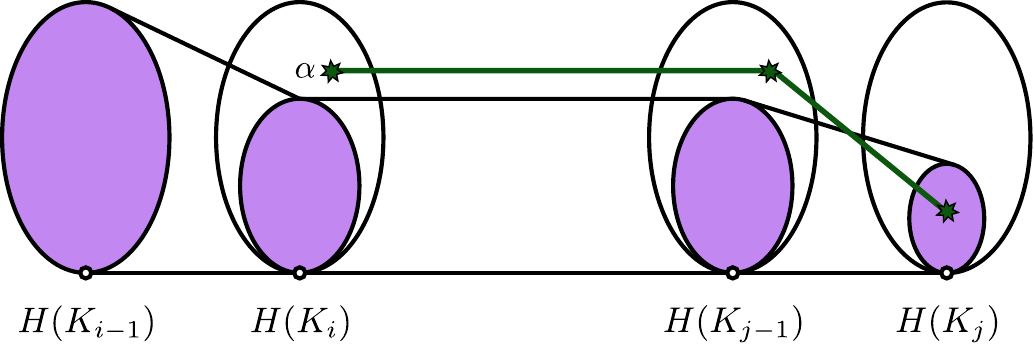}
 \end{center}
 \caption{A visual representation of births and deaths of homology classes.
 The class $\alpha$ is born at $K_i$ because it is not in the image of the map from $H(K_{i-1})$.
 It dies entering $K_j$ because it was still not in the image of $H(K_{i-1})$ at $H(K_{j-1})$,
 but has merged with an older class upon entering $H(K_j)$.}
\label{F:BirthDeathClass}
\end{figure}

To determine when classes are born and die in the filtration, we build the boundary matrix $D$.
This is a square matrix with a row and column for each simplex in $K$,
ordered with respect to the filtration (which  ensures that a simplex comes after all of its faces).
$D$ is a $0$-$1$ matrix which has a $1$ in location $D[i,j]$ if and only if the $i$th simplex is a
face of the $j$th.
Applying the persistence algorithm as in \cite{Edelsbrunner2010} yields a reduced matrix $R=DV$,
where $D$ is the boundary matrix and $V$ is an elementary matrix storing the column operations
performed on $D$ during the persistence algorithm.
Here, a reduced matrix is one in which every column is either completely zero, or the lowest $1$
in the column is not the same as the lowest 1 in any other column.
We write $\low_R(j)=i$ if the lowest 1 in column $j$ of matrix $R$ is in row $i$.

To read the births and deaths from the matrix, note that a $p$-dimensional class is born with addition of a
dimension $p$ simplex $\sigma$ if the column corresponding to $\sigma$ is completely zero.
A representative for the class that is born is stored in the corresponding column of $V$.
The class born at $\sigma$ dies with the addition of $\tau$ if the lowest one of the column corresponding to
$\tau$ is in the row corresponding to $\sigma$,
in which case $\sigma$ and $\tau$ are paired.
If the addition of a simplex $\sigma$ gives birth to a class, it is called a \textit{positive simplex}.
Similarly, if the addition of a simplex $\tau$ gives death to a class, it is called a \textit{negative simplex}.

The de Silva-Ghrist criterion requires us to work with persistent \textit{relative} homology.
In this case, we take a pair $(\RR,\FF)$, where $\FF$ is a subcomplex of $\RR$, and a filtration
\begin{center}
 \includegraphics[scale=.85]{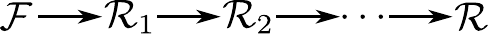}
\end{center}
% \begin{equation*}
%  \xymatrix{
%  \FF \ar[r] & \RR_1 \ar[r] & \RR_2 \ar[r] & \cdots  \ar[r] & \RR
%  }
% \end{equation*}
 inducing maps on relative homology
\begin{center}
 \includegraphics[scale=.85]{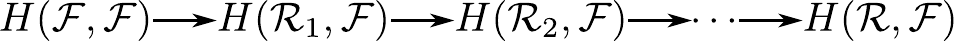}
\end{center}
% \begin{equation*}
%  \xymatrix{
% H_p(\RR_1,\FF) \ar[r] & H_p(\RR_2,\FF) \ar[r] & \cdots  \ar[r] & H_p(\RR,\FF)
%  }
% \end{equation*}
and consider births and deaths in the usual way.
This requires a slight modification of the boundary matrix by reordering the rows so that those
simplices which are in the subspace $\FF$ are moved to the top of the matrix prior to performing the persistence algorithm.
Then the addition of $\sigma$ gives birth to a class if its column is either zero,
or the lowest one in its column corresponds to a simplex in $\FF$.
Simplex $\tau$ gives death to a class if its lowest one corresponds to a simplex which is not in $\FF$.
Other than these distinctions, computing persistent homology in the relative case is the same as in the absolute case.

%%%%%%%%%%%%%%%%%%%%%%%%%%%%%%%%%%%%%%%%%%%%%%%%%%%%%%%
%%%%%%%%%%%%%%%%%%%%%%%%%%%%%%%%%%%%%%%%%%%%%%%%%%%%%%%
\section{The Coverage Criterion}\label{S: Coverage Criterion}

Working in a simply-connected domain in the plane,
suppose that we have a set of sensors with a fixed radius of coverage.
Our goal is check that these sensors cover the whole domain.
What makes this a challenging problem is that
{\em we do not assume that we know the locations of the sensors}.
This means that standard geometric techniques are not applicable.
Instead we turn to topology to answer the coverage question by building
the {\em Rips complex} on the set of sensors,
thought of as points in the plane.
We can then use homology to check for holes in the coverage.

Let   $\chi$ be the set of points corresponding to the location of the set of sensors in a compact connected domain $\Delta\subset \R^2$
which has a piecewise linear boundary.
Suppose that each sensor has a fixed coverage radius $r_c > 0$.
The question is whether every point in $\Delta $ lies within distance $r_c$ of some sensor in $\chi$.
We do not use the distance $r_c$ to build the Rips complex, instead we add an additional capability to each sensor.
Let $r_b > 0$ be fixed, with $r_b \leq \sqrt{3} r_c$ for technical reasons.
Each sensor is given a unique identification number to broadcast.
If another node is within distance $r_b$, it can hear the signal and identify the ID
number, but it has no information about the location of the broadcaster.
In particular, it does not know its direction or its exact distance,  only that that distance
is less than $r_b$.
Whenever two sensors can hear each others' identification number, an edge is
placed in the Rips complex.
Higher dimensional simplices are then added when all of their faces are already there.

The boundary of the domain $\Delta$ is taken to be
piecewise linear with a sensor at each of its vertices.
The boundary is called the
 \textbf{fence}, and
each  node in the fence knows the identification number of its two
fence neighbors, both of which are  within distance $r_b$.

Summarizing, following \cite{DeSilva2006},
 the assumptions are:
\begin{enumerate}
	\item Nodes $\chi$ broadcast their unique ID numbers.
	          Each node can detect the identity of any node within broadcast radius $r_b$.
	\item Nodes have radially symmetric covering domains of cover radius $r_c \geq r_b/\sqrt{3}$.
	\item Nodes lie in a compact connected domain $\Delta \subset \R^2$ whose boundary
	         $\partial \Delta $ is connected and piecewise-linear with vertices marked fence
	         nodes $\chi_f$.
		  Non-fence nodes are called interior nodes,  and denoted $\chi_{int}$.
	\item Fence nodes $\chi_f$ are ordered cyclically and each $v \in \chi_f$ knows the
	        identities of its two neighbors on $\partial \Delta $.
	        These neighbors both lie within distance $r_b$ of $v$.
\end{enumerate}
Based on this information, build the Rips complex $\mathcal{R}$ with the fence $\mathcal{F}$ as a subcomplex.
With this setup, de Silva and Ghrist in \cite{DeSilva2006}
give their controlled boundary criterion for coverage:
\begin{thm}[de Silva/Ghrist Criterion (dS-G)] \label{Criterion}
   If there is a nontrivial element of the relative homology group
   $H_2(\mathcal{R}, \mathcal{F})$ which maps to a nonzero class under the
   connecting homomorphism $H_2(\mathcal{R}, \mathcal{F}) \to H_1(\mathcal{F})$,
   then the union of the disks of radius $r_c$ about the nodes contains all of $\Delta $.
\end{thm}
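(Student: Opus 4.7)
The plan is to build a continuous \emph{shadow} map $\pi \colon |\mathcal{R}| \to \R^2$ that sends each abstract Rips simplex to the convex hull of its vertices in the plane, and then to push the homological hypothesis through this map to force coverage. The key geometric input is that $\pi(|\mathcal{R}|) \subseteq U := \bigcup_{v \in \chi} B(v, r_c)$. By Carath\'eodory's theorem, every point of $\pi(|\mathcal{R}|)$ lies in the convex hull of at most three vertices, so it suffices to show that any triangle with all three sides of length $\leq r_b$ is covered by the radius-$r_c$ disks about its vertices. Such a triangle has circumradius $\leq r_b/\sqrt{3}$, because its largest angle is at least $\pi/3$ and the opposite side equals $2R\sin(\text{angle}) \geq R\sqrt{3}$; if the triangle is acute, every interior point is within circumradius $\leq r_b/\sqrt{3} \leq r_c$ of some vertex, while if the triangle is obtuse the maximum distance to the nearest vertex is half the longest side, so at most $r_b/2 < r_c$. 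The same estimate shows that fence edges lie in $U$ and that $\pi$ restricts to a homeomorphism $|\mathcal{F}| \to \partial \Delta$; in particular $\partial \Delta \subseteq U$.

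Next I would argue by contradiction: suppose some $p \in \Delta$ is uncovered. Since $\partial \Delta \subseteq U$ and $p \notin U$, the point $p$ lies in the interior of $\Delta$, so $\partial \Delta$ winds around $p$ and represents the generator of $H_1(\R^2 \setminus \{p\}; \Z_2) \cong \Z_2$. The hypothesis supplies $[\alpha] \in H_2(\mathcal{R}, \mathcal{F})$ with $0 \neq \partial[\alpha] \in H_1(\mathcal{F}; \Z_2) \cong \Z_2$, where the target is $\Z_2$ because $\mathcal{F}$ is a combinatorial circle; hence $\partial[\alpha]$ is the fence fundamental cycle. Viewing $\pi$ as a map of pairs $(|\mathcal{R}|, |\mathcal{F}|) \to (U, \partial \Delta)$ and invoking naturality of the connecting homomorphism, the class $\pi_*[\alpha] \in H_2(U, \partial \Delta; \Z_2)$ satisfies $\partial \pi_*[\alpha] = [\partial \Delta]$, so $[\partial \Delta] = 0$ in $H_1(U; \Z_2)$ by exactness of the long exact sequence of the pair $(U, \partial \Delta)$. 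Pushing along the inclusion $U \hookrightarrow \R^2 \setminus \{p\}$, which is valid since $p \notin U$, gives $[\partial \Delta] = 0$ in $H_1(\R^2 \setminus \{p\}; \Z_2)$, contradicting the winding statement above.

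The main obstacle is the sharp geometric lemma: the constant $\sqrt{3}$ is exactly tight at the equilateral triangle, so the planarity and the circumradius inequality are where the hypothesis $r_b \leq \sqrt{3}\, r_c$ truly does work. Once the shadow map is known to land in $U$, the rest of the proof is a short naturality-and-long-exact-sequence chase together with the observation that $\pi(|\mathcal{F}|) = \partial \Delta$, which follows from the cyclic ordering of fence nodes and the assumption that consecutive fence vertices lie within $r_b$ of each other.
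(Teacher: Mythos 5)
Your proof is correct, but note that the paper itself offers no proof of Theorem \ref{Criterion}: it is quoted verbatim from \cite{DeSilva2006}, and your argument is essentially the one given there. The two ingredients you identify --- the shadow projection $\pi\colon|\RR|\to\R^2$ landing in the union of $r_c$-disks via the sharp $r_b\le\sqrt{3}\,r_c$ triangle-covering lemma, followed by naturality of the connecting homomorphism for $\pi\colon(|\RR|,|\FF|)\to(U,\partial\Delta)$ and the winding-number contradiction at an uncovered point $p$ --- are exactly the de Silva--Ghrist argument, and your handling of the acute (circumradius) versus obtuse (half the longest side) cases of the covering lemma is sound.
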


The class $[\alpha] \in H_2(\RF)$ is \textbf{fundamental} if it satisfies the criterion of
Theorem \ref{Criterion},
but we stress that when there is such an element it is not necessarily unique.
The term \textbf{absolute cycle} will be used for a class in $H_2(\RF)$ that comes
from $H_2(\RR)$, which is equivalent to saying that it maps to $0$
under the connecting homomorphism.

The assumption that $r_c \geq r_b/\sqrt{3}$ is required to compensate for the fact that the Rips complex
does not accurately reflect the topology of the cover.
While this bound promises that holes in the cover appear also as holes in the Rips complex,
we can still create examples where phantom holes appear in the Rips complex even though no hole exists in the cover itself.
In a perfect world, this theory would be built on \Cech complexes,
however the lack of location data for the nodes makes this method impossible.

%%%%%%%%%%%%%%%%%%%%%%%%%%%%%%%%%%%%%%%%%%%%%%%%%%%%%%%
%%%%%%%%%%%%%%%%%%%%%%%%%%%%%%%%%%%%%%%%%%%%%%%%%%%%%%%

\section{Sensor Failure} \label{S: Sensor Failure}
%%%%%%%%%%%%%%%%%%%%%%%%%%%%%%%%%%%%%%%%%%%%%%%%%%%%%%%
Over time, sensors have a likelihood of failure which increases the longer the system is in place,
caused perhaps by malicious actions, environmental conditions or mechanical failure.
As nodes fail, there are two possible effects on the coverage:
either the death of a subset of nodes
creates a hole in the Rips complex,
or the removal of the nodes does not affect the existence of a fundamental class.
Once again, we emphasize
 that we are specifically not looking for the probability of failure of the \textit{cover}
over time, just the failure of the dS-G criterion.
For this reason, we also assume that only interior nodes can fail.
The loss of a fence node causes instant failure of the dS-G criterion,
so there is nothing to check in this case.

Let us start with a Rips complex pair $(\RR,\FF)$ built from a set of nodes $\chi$.
At time $t=0$, we assume we have a fundamental class $[\alpha] \in H_2(\RF)$.
If a set of interior sensors $B \subset \chi_{int}$ fails, any simplex in $\RR$
that has a vertex in the set $B$ is lost.
Therefore this subcomplex, $\RR_B$, can be thought of as the largest subcomplex of $\RR$ that has
 $\chi-B$ as its vertices.
We could then determine whether $\RR_B$ fails the dS-G criterion by looking for a fundamental class in
$H_2(\RR_B,\FF)$, but this is a very narrow view of the problem.
Much more information is available in a filtration that we will now construct.
Note that it will  contain $\RR_B$ as one of its subcomplexes.

Let $|\chi_{int}|=\chiint$.
Order the nodes so that
$B=\left\{ v_1,\cdots,v_\sizeset \right\}$ and $\chi_{int}-B=\left\{ v_{\sizeset+1},\cdots,v_\chiint \right\}$.
Let $V_{i}=\left\{ v_1,\cdots,v_i \right\}$ so that $V_\sizeset=B$,
and let $\RR_i=\RR_{V_i}$ be the maximal subcomplex of $\RR$ with vertices $\chi - V_i$.
Then the filtration
\begin{center}
 \includegraphics[scale=.85]{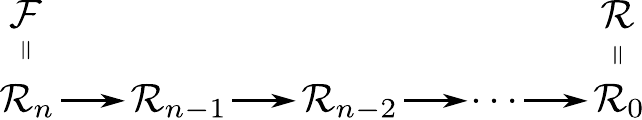}
\end{center}
% \begin{equation*}
% \FF = \RR_\chiint \subset \RR_{\chiint-1} \subset \cdots \RR_B \subset \cdots \RR_0 = \RR
% \end{equation*}
induces maps on relative homology
\begin{center}
 \includegraphics[scale=.85]{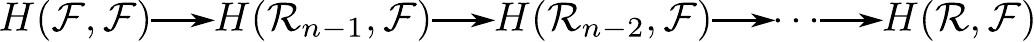}
\end{center}
% \begin{equation*}
%  \cdots \to H_2(\RR_i,\FF) \to H_2(\RR_{i-1},\FF) \to \cdots
% \end{equation*}
An example of this filtration is illustrated in Figure \ref{fig: rips complex example}.
\begin{figure}[ht]
\begin{center}
\includegraphics{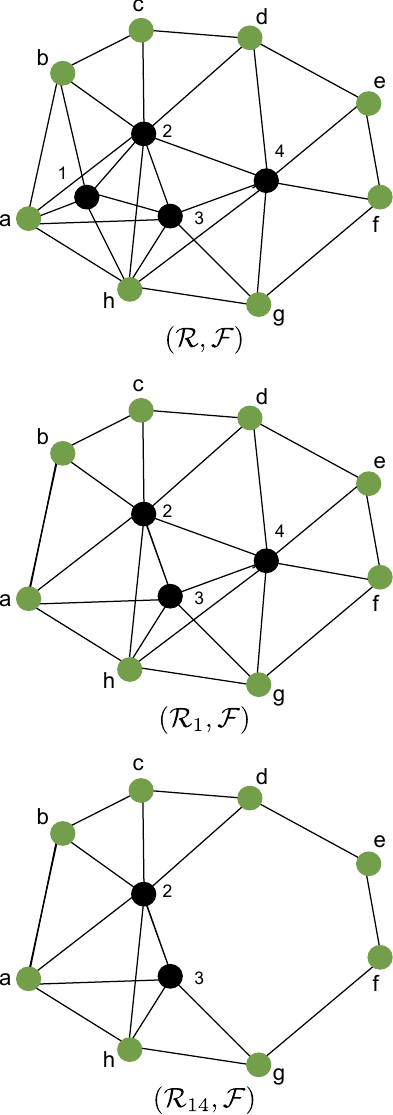}
\end{center}
\caption{ A small scale example of the Rips complex $\RR$ built from a set of points in the plane.
The outer ring of nodes labeled with letters is the fence $\FF$.} \label{fig: rips complex example}
\end{figure}
Intuitively, we expect that discovering a fundamental element at any point in this sequence implies
that there is a fundamental element in any subsequent group.

 \begin{lemma}\label{L:map}
Let $A \subset B$ be subsets of $\chi_{int}$.  Then if $[\beta] \in H_2(\RR_B,\FF)$ is fundamental,  its image under the map
$$
\xymatrix{
H_2(\RR_B,\FF)  \ar[r]^{i_*}& H_2(\RR_A,\FF)
}
$$
is also  fundamental.
\end{lemma}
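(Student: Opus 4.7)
The plan is to exploit the naturality of the connecting homomorphism in the long exact sequence of a pair. The key point is that since $A \subset B$, we have $\chi - B \subset \chi - A$, so $\RR_B \subset \RR_A$, while the subcomplex $\FF$ is identical in both pairs (no fence nodes are removed). This gives a genuine map of pairs $(\RR_B, \FF) \hookrightarrow (\RR_A, \FF)$ in which the second coordinate is literally the identity on $\FF$.

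The main step is to write down the commutative square arising from naturality of the connecting homomorphism $\partial$ in the long exact sequence of a pair:
$$
\xymatrix{
H_2(\RR_B,\FF) \ar[r]^{i_*} \ar[d]_{\partial_B} & H_2(\RR_A,\FF) \ar[d]^{\partial_A} \\
H_1(\FF) \ar[r]^{\mathrm{id}} & H_1(\FF)
}
$$
The bottom map is the identity precisely because the inclusion of pairs restricts to the identity on $\FF$. Commutativity then gives $\partial_A(i_*[\beta]) = \partial_B([\beta])$. By hypothesis, $[\beta]$ is fundamental, so $\partial_B([\beta]) \neq 0$ in $H_1(\FF)$; therefore $\partial_A(i_*[\beta]) \neq 0$, which by definition means $i_*[\beta]$ is fundamental in $H_2(\RR_A, \FF)$.

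There is essentially no obstacle: once one observes that $\FF$ is common to both pairs and that the connecting map in the pair long exact sequence is natural under maps of pairs, the conclusion is immediate. The only minor subtlety to double-check is that $i_*[\beta]$ really does pass through the inclusion at the level of cycles — i.e., a relative $2$-cycle in $\RR_B$ representing $[\beta]$ is still a relative $2$-cycle when viewed in $\RR_A$, and its boundary in $\FF$ is unchanged. This is immediate from the definition of the boundary operator and the fact that the inclusion of chain complexes is compatible with $\partial$, but it is worth stating to make explicit that the image $i_*[\beta]$ has the same boundary class in $H_1(\FF)$ as $[\beta]$ did.
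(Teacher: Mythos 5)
Your proof is correct and follows exactly the same route as the paper: both invoke the naturality of the connecting homomorphism to get the commutative square with the identity on $H_1(\FF)$ along the bottom, and conclude $\partial_A(i_*[\beta]) = \partial_B[\beta] \neq 0$. The extra remark about chain-level compatibility is fine but not needed beyond what naturality already supplies.
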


  \begin{proof}
 Consider the commutative diagram
 \begin{equation}\label{D:lemma1}
	 \xymatrix{
	 H_2(\RR_B,\FF) \ar[r]^{i_*}\ar[d]_\partial & H_2(\RR_A,\FF) \ar[d]^\partial\\
	 H_1(\FF) \ar[r]^= & H_1(\FF)
	 }
 \end{equation}
 where the horizontal maps are induced by the inclusion $\RR_B \subset \RR_A$ and the vertical maps are the boundary maps.
Since $[\beta] \in H_2(\RR_B,\FF)$ is fundamental,  $\partial [\beta]\neq 0$ in $H_1(\FF)$.
Since the diagram commutes, $\partial i_*[\beta] = \partial [\beta]$ and hence is nonzero.
This also implies that $i_*[\beta]$ is nonzero, so it is a fundamental element of $H_2(\RR_A,\FF)$.
  \end{proof}

This lemma shows that if  $\RR_B$ passes the dS-G criterion,
then $\RR_A$ passes the dS-G criterion for all $A\subset B$.
But it also shows that if $\RR_B$ fails the dS-G criterion, then $\RR_A$ fails the dS-G  criterion for all
$A \supset B$.
Thus if  $B$ is a death set, and $B \subset A$,
Lemma \ref{L:map} implies that $A$ is also a death set.
This leads us to make the following definitions:

\begin{defn}
A set $B \subset \chi_{int}$ whose removal causes failure of the dS-G criterion is called a
\textbf{death set}.
A death set $B \subset \chi_{int}$ is a \textbf{minimal death set} if no subset of $B$ is itself a death set.
\end{defn}

\begin{defn}
If the removal of $B$ does not cause failure, we call $B$ a \textbf{cake set}.
A cake set $B \subset \chi_{int}$ is a \textbf{maximal cake set} if no superset of $B$ is also a cake set.
\end{defn}
As we will show in section \ref{S: Probability of Failure}, minimal death sets are directly related to the failure of the dS-G criterion.
However, we first look at the issues arising from the complexity of the problem.

%%%%%%%%%%%%%%%%%%%%%%%%%%%%%%%%%%%%%%%%%%%%
%%%%%%%%%%%%%%%%%%%%%%%%%%%%%%%%%%%%%%%%%%%%

\section{Complexity Issues}\label{S: Complexity}

The first issue to address is whether this problem is computationally complex.
In this section, we will show that  in fact it is part of a slightly larger group of problems which are  NP-hard, more specifically \#P-complete.

\subsection{Use of the 2-skeleton} \label{S: 2-skeleton}
A simplifying step is to work with the 2-skeleton $\RR^2$ of $\RR$ rather than the full complex.
To justify this, notice that passing to the 2-skeleton does not affect our observance of the dS-G criterion:

\begin{lemma}
The dS-G criterion is satisfied for $\RR$ if and only if it is satisfied for $\RR^2$,
where $\RR^2$ is the 2-skeleton.
\label{L: 2-skeleton}
\end{lemma}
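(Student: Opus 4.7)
The plan is to exploit the fact that 2-dimensional relative cycles in $\RR$ live entirely in the 2-skeleton, so passing from $\RR^2$ to $\RR$ only introduces more relations (from 3-simplex boundaries), never new cycles. I will set up the natural commutative diagram relating $H_2(\RR^2,\FF)$ and $H_2(\RR,\FF)$ via the inclusion $i\colon \RR^2 \hookrightarrow \RR$, and show that fundamentality is transported in both directions.

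For the forward direction, I start with a fundamental class $[\alpha]\in H_2(\RR^2,\FF)$. I push it forward to $i_*[\alpha]\in H_2(\RR,\FF)$ and invoke the square
\begin{equation*}
\xymatrix{
H_2(\RR^2,\FF) \ar[r]^{i_*} \ar[d]_{\partial} & H_2(\RR,\FF) \ar[d]^{\partial} \\
H_1(\FF) \ar[r]^{=} & H_1(\FF)
}
\end{equation*}
(exactly analogous to Diagram~\eqref{D:lemma1}) to conclude $\partial i_*[\alpha] = \partial [\alpha]\neq 0$, which simultaneously certifies $i_*[\alpha]\neq 0$ and establishes that it is a fundamental class for $\RR$.

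For the backward direction I need to construct a preimage. Given a fundamental $[\beta]\in H_2(\RR,\FF)$, pick any representative relative 2-cycle $\beta$. Every 2-chain of $\RR$ is by definition a $\Z_2$-combination of 2-simplices, so $\beta$ is already a chain in $\RR^2$; moreover its boundary lies in $\FF \subset \RR^2$, so $\beta$ represents a class $[\beta'] \in H_2(\RR^2,\FF)$ with $i_*[\beta']=[\beta]$. Applying the same commutative square yields $\partial[\beta'] = \partial [\beta] \neq 0$, which forces $[\beta']\neq 0$ and shows it is fundamental in $\RR^2$.

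The main things to get right are (i) the naturality of the connecting homomorphism with respect to $i$ (so that the square commutes), and (ii) the chain-level observation that the relative 2-cycle groups $Z_2(\RR,\FF)$ and $Z_2(\RR^2,\FF)$ literally coincide --- this is what makes the lift in the backward direction free. There is no real obstacle beyond bookkeeping; the only temptation to avoid is worrying about the kernel of $i_*$, which is irrelevant since we only need surjectivity onto fundamental classes, and that is immediate from the chain-level equality of 2-cycles.
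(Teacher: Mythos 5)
Your proof is correct, and while the easy direction coincides with the paper's (naturality of the connecting homomorphism gives $\partial i_*[\alpha]=\partial[\alpha]\neq 0$), your backward direction takes a genuinely different route. The paper works inside the ladder of long exact sequences of the pairs $(\RR^2,\FF)$ and $(\RR,\FF)$: it notes that $H_1(\RR^2)=H_1(\RR)$ because simplices of dimension $\geq 3$ do not affect $H_1$, deduces that $\partial[\beta]\in H_1(\FF)$ dies in $H_1(\RR^2)$, and then uses exactness of the top row to produce \emph{some} class $\alpha\in H_2(\RR^2,\FF)$ with $\partial\alpha=\partial\beta$ --- not necessarily a lift of $[\beta]$ itself, but enough to witness the criterion. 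You instead prove the stronger statement that $i_*\colon H_2(\RR^2,\FF)\to H_2(\RR,\FF)$ is surjective, via the chain-level observation that $C_2(\RR)=C_2(\RR^2)$ and $\FF\subset\RR^2$, so the relative $2$-cycle groups literally coincide and any representative of $[\beta]$ already defines a class upstairs; naturality of $\partial$ then finishes it. Both arguments are sound. Yours is more elementary, avoids the exact-sequence bookkeeping, and makes transparent why the lemma holds (passing to the full complex only adds relations from $3$-simplex boundaries, never new relative $2$-cycles) --- an observation the paper itself exploits later when it remarks that $Z_2=H_2$ in the $2$-skeleton. The paper's diagram chase, on the other hand, is the argument that transfers verbatim to settings where one cannot identify cycle groups on the nose.
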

\begin{proof}
Consider the following diagram built from the long exact sequences for the pairs $(\RR,\FF)$
and $(\RR^2,\FF)$ and the maps induced by the inclusion $\RR^2 \to \RR$:
\begin{equation*}
\xymatrix{
H_2(\RR^2,\FF) \ar[d]_{i_*}\ar[r] & H_1(\FF)\ar[d]_{=} \ar[r] & H_1(\RR^2)\ar[d]\\
H_2(\RF) \ar[r] & H_1(\FF) \ar[r] & H_1(\RR)
}
\end{equation*}
If the dS-G criterion is satisfied for $\RR^2$, then there is an $\alpha \in H_2(\RR^2,\FF)$
such that $\partial(\alpha)$ is nonzero.
Clearly $i_*(\alpha)$ also satisfies the dS-G criterion since $\partial(i_*(\alpha))=\partial \alpha$.

Now assume that the dS-G criterion is satisfied for $\RR$.
Then there is a $\beta \in H_2(\RR,\FF)$ such that $\partial\beta$ is nonzero in $H_1(\FF)$.
Here it is important to note that since $\RR^2$ is the 2-skeleton of $\RR$,
$H_1(\RR)=H_1(\RR^2)$.
As the top and bottom rows are exact with the last two groups equal,
and since $\partial \beta \in H_1(\FF)$ maps to zero in $H_1(\RR)$,
it follows that it also maps to zero in $H_1(\RR^2)$.
Because the top row is exact, there is an $\alpha \in H_2(\RR^2,\FF)$ which maps to $\partial \beta$,
and hence satisfies the dS-G criterion.
\end{proof}

This lemma implies that the sets of death sets, minimal death sets, cake sets,  and maximal cake sets are equivalent
to their counterparts when computed in the 2-skeleton.
It also implies that the probability of failure of the dS-G criterion in the full Rips complex
and the probability of failure of the dS-G criterion in the 2-skeleton are the same.
And lastly, in $\RR^2$ there is exactly one cycle representing each homology class ($Z_2 = H_2$).
For these reasons we will simplify notation and write $\RR$
for the 2-skeleton of the Rips complex for the remainder of the paper.

%%%%%%%%%%%%%%%%%%%%%%%%%%%%%%%%%%%%%%%%%%%%%%%%%%%%%%%
%%%%%%%%%%%%%%%%%%%%%%%%%%%%%%%%%%%%%%%%%%%%%%%%%%%%%%%

\subsection{\#P-Complete}\label{S: SharpP-Complete}

The class of problems defined as \#P-complete was introduced by Valiant in \cite{Valiant};
they form a specific class of NP-hard problems.
Typically, \#P-complete problems are concerned with counting \textit{how many} of something
exists  whereas general NP problems just ask \textit{if} something exists.
Problems which are \#P-complete likely do not have polynomial time algorithms.

To show that a problem is \#P-complete, we reduce one difficult problem to another.
Reducing problem $A$ to problem $B$ means that we take any instance of problem $A$,
use it to create an instance of problem $B$, and conclude that the answer to solving problem $B$
gives an answer to problem $A$.
To prove NP-completeness or \#P-completeness,
both turning an instance of problem $A$ into an instance of problem $B$
and returning the answer to problem $A$ given the solution to problem $B$ must
be done in polynomial time.
If we can reduce $A$ to $B$ in polynomial time,  we write $A \leq_P B$.

A reduction from $A$ to $B$ is called \textit{parsimonious} if the number of solutions for $A$
is in one-to-one correspondence to solutions for $B$.
This is an important property for proving that problems are \#P-complete since we need to be
able to count the number of solutions of $A$ based on the number of solutions of $B$.

In order to show that our sensor network problem is \#P-complete,
we need to find a polynomial time, parsimonious reduction from a \#P-complete problem to our problem.

It has been known for several decades that the computer science problem of {\bf network reliability}
is \#P-complete \cite{Garey1979,Colbourn1987}.
We will specifically work with the two terminal network reliability problem as defined in \cite{Garey1979}.
An instance of the problem is a graph $G=(V,E)$ with marked vertices $\left\{ s,t \right\}$,
a rational failure probability $p_e$, $0 \leq p_e \leq 1$ for each edge $e \in E$,
and a positive rational number $q\leq 1$.
Then, assuming edge failures are independent of one another,
we ask whether the probability that $s$ and $t$ have a path with no failed edge
is greater than or equal to $q$.
The fact that this problem is hard in the class of counting problems comes from needing to count the possible paths from $s$ to $t$ when determining the probability of failure.

This problem has striking similarities to ours, and the closeness is even more pronounced
when we look at it in the following way.
Considering the graph $G$ as a one-dimensional simplicial complex, a path in $G$ with
endpoints at $s$ and $t$ is a fundamental class in $H_1(G,\left\{ s,t \right\})$,
where fundamental means that the boundary of the class is homologous to $[s]+[t]$ in
$H_0(\left\{ s,t \right\})$.

Our goal is to reduce network reliability to our problem, which we will therefore call
{\em 2-dimensional network reliability}.
An instance of the problem is a simplicial complex $X$ with a subcomplex
$Y$ that is homeomorphic to $S^1$.
We also have rational probabilities of failure $p_v$, $0 \leq p_v \leq 1$, on the  vertices not in $Y$,
and a value $0<q\leq 1$.
We ask the following question:
{\em Given the fact that failures of vertices are independent of each other, is the probability
that we have a fundamental class $\alpha \in H_2(X,Y)$ at least $q$?}

Notice that our definition of the problem takes no account of the geometry inherent in the
originally defined problem
as we are ignoring the fact that we obtained this simplicial complex from a set of points in $\R^2$, and thus we are proving a larger class of problems to be \#P-complete.
To prove that 2-dimensional network reliability is \#P-complete, we must take an instance of the
1-dimensional network reliability problem, turn it into an instance of the 2-dimensional case in
polynomial time, take the solution given there and turn it into an answer to the 1-dimensional
case in polynomial time.

\begin{thm}
2-dimensional network reliability is \#P-complete.
\end{thm}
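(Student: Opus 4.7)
The plan is to exhibit a polynomial-time parsimonious reduction from 1-dimensional network reliability, known to be \#P-complete, to our 2-dimensional problem. Membership of 2-dimensional network reliability in \#P is routine: one can guess a failure set of non-$Y$ vertices and run the persistence algorithm of Section~\ref{S: Persistent Homology} on the surviving subcomplex to check in polynomial time whether a fundamental class exists.

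\textbf{Construction.} Given an instance $(G=(V,E),s,t,\{p_e\},q)$ of 1-dimensional network reliability, subdivide each edge $e\in E$ by inserting a midpoint vertex $w_e$ to obtain $G'$, and form the simplicial suspension $X=\Sigma G'$: adjoin two cone vertices $p_n,p_s$ together with edges $(v,p_n),(v,p_s)$ for each vertex $v$ of $G'$ and triangles $\langle u,v,p_n\rangle$, $\langle u,v,p_s\rangle$ for each edge $(u,v)$ of $G'$. Let $Y=\Sigma\{s,t\}$ be the subcomplex consisting of the four edges $sp_n,sp_s,tp_n,tp_s$, which is homeomorphic to $S^1$. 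Assign $p_{w_e}=p_e$ to each midpoint, set $p_v=0$ on every other non-$Y$ vertex, and keep the threshold $q$.

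\textbf{Correctness.} Since failure of $w_e$ deletes every simplex of $X$ incident to $w_e$, the surviving subcomplex $X_S$ equals $\Sigma G_S'$, where $G_S\subseteq G$ is obtained by deleting the failed edges. It therefore suffices to show that $H_2(\Sigma G_S',Y)$ contains a fundamental class iff $s$ and $t$ lie in the same component of $G_S$. For the forward direction, any $s$-$t$ path $\gamma$ in $G_S'$ gives a 2-chain $\Sigma\gamma$ whose relative boundary is exactly the 4-cycle $Y$, a generator of $H_1(Y)=\Z/2$. For the converse, a chain-level calculation shows that every 2-chain $c$ in $\Sigma G_S'$ with $\partial c\in C_1(Y)$ must have matching $p_n$- and $p_s$-coefficients on each edge, so $c=\Sigma D$ for some 1-chain $D$ in $G_S'$ with $\partial D\subseteq\{s,t\}$; moreover $\partial c$ represents the generator of $H_1(Y)$ precisely when $\partial D=s+t$. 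Since $\partial D$ has even support on each component of $G_S'$, $\partial D=s+t$ is achievable iff $s$ and $t$ share a component. Consequently the probability that $X_S$ admits a fundamental class equals the probability of $s$-$t$ connectivity in $G$.

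\textbf{Complexity, parsimony, and main obstacle.} The complex $X$ contains $|V|+|E|+2$ vertices and $O(|E|)$ simplices, so the construction is linear. Because only midpoints have positive failure probability, failure configurations of $X$ biject in a probability-preserving way with edge failure configurations of $G$, making the reduction parsimonious and allowing us to use the same threshold $q$. The main obstacle is the converse half of correctness: one must rule out exotic 2-chains whose relative boundary could incidentally trace $Y$. The argument succeeds because every 2-simplex of $X$ contains exactly one of the two cone vertices, which forces the structure $c=\Sigma D$ on any qualifying 2-chain and reduces the homological question to an elementary parity calculation on a 1-chain in $G_S'$.
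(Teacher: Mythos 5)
Your proof is correct, but it takes a genuinely different route from the paper's. The paper orders the vertices of $G$ by a height function $r$ with $r(s)=1$, $r(t)=n$, subdivides $G$ at integer levels, forms $G'\times I$ with the top and bottom collapsed onto intervals via $r$, triangulates each rectangle through a barycenter, and places the failure probabilities on those barycenters; its correctness argument aims at a one-to-one correspondence between $s$--$t$ paths in $G$ and fundamental classes of $H_2(X,Y)$. You instead take the simplicial suspension $X=\Sigma G'$ with $Y=\Sigma\{s,t\}$ and put the probabilities on edge midpoints, and your correctness argument is a short chain-level computation: since every $2$-simplex contains exactly one cone point and $B_2(X,Y)=0$, any relative cycle with boundary in $C_1(Y)$ is forced to be $\Sigma D$ for a $1$-chain $D$ with $\partial D\subseteq\{s,t\}$, and it hits the generator of $H_1(Y)\cong\Z_2$ exactly when $\partial D=s+t$, i.e.\ when $s$ and $t$ are connected. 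What your version buys is robustness and economy: you do not need the vertex ordering, the level subdivision, or the collapsing, and you reduce directly to $s$--$t$ connectivity --- which is precisely the event whose probability two-terminal reliability measures --- so probability preservation (a bijection of failure configurations, each with equal probability) is immediate and the threshold $q$ carries over unchanged. What the paper's version buys is a (claimed) solution-level bijection between paths and fundamental classes, a stricter form of parsimony; your fundamental classes instead correspond to the coset of the cycle space consisting of all $1$-chains with boundary $s+t$, so they are not in bijection with simple paths --- but since the reduction is between threshold problems on probabilities rather than literal counting problems, probability preservation is all that is needed, and your argument supplies it cleanly. The one point worth tightening is \#P membership of the threshold problem with rational probabilities (clearing denominators to reduce to a weighted count), which you gesture at and the paper omits entirely.
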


\begin{proof}
Consider a finite graph $G$ with vertex set $V$ and  probability of failure
$p_e$ given on each edge $e$.
We will construct a 2-dimensional simplicial complex $X$ with a subcomplex $Y \iso S^1$
 so that there is a one-to-one correspondence between paths from $s$ to $t$ in $G$
  and fundamental classes of $H_2(X,Y)$.
This correspondence will also preserve the probability of failure of the class,
so this will imply that the probability of failure in the 1-dimensional case can be computed by
determining the probability of failure in the 2-dimensional case.

Suppose that $G$ has $n$ vertices.
Order these vertices by choosing a map
$r:V \to \R$ that sends each vertex to a distinct integer in
$\left\{ 1,\cdots,n \right\}$, with $r(s)=1$ and $r(t)=n$.
Extend $r$ to all of $G$ by linear interpolation over each edge,
and subdivide $G$ by adding vertices at all points of $r^{-1}( \{ 1,\cdots,n \} )$.
Call the result $G'$, the map $r': G' \to $ is now piecewise linear.
If an edge $e$ of $G$ is subdivided into $k$ subedges in $G'$,
we set the probability of failure for one of the subedges equal to $p_e$ and the rest equal to 0.  
See Figure [\ref{fig:graphexample}] for an example of building this graph.

Form the complex
\begin{equation*}
G'\times I / \sim
\end{equation*}
where $(x,z) \sim (x',z')$ iff $r(x)=r(x')$ and either $z=z'=1$ or $z=z'=0$.
Note that this collapses the top and  bottom graphs each onto a separate copy of the interval
$[r(s), r(t)]$.
To make this a true simplicial complex, divide each rectangle of the form $e \times I$
into triangles by placing a vertex at the barycenter and adding the obvious four new edges
and four new triangles.
The resulting complex will be called $X$.
Then define $Y$ to be the subcomplex
$G'\times \left\{ 0,1 \right\}/\sim$ together with the two edges $s \times I$ and $t \times I$;
$Y$ is homeomorphic to $S^1$ by construction.
(In Figure [\ref{fig:graphexample}] we
have not subdivided the rectangles to keep the picture uncluttered.)

\begin{figure}[h]
\begin{center}
\includegraphics[scale=.5]{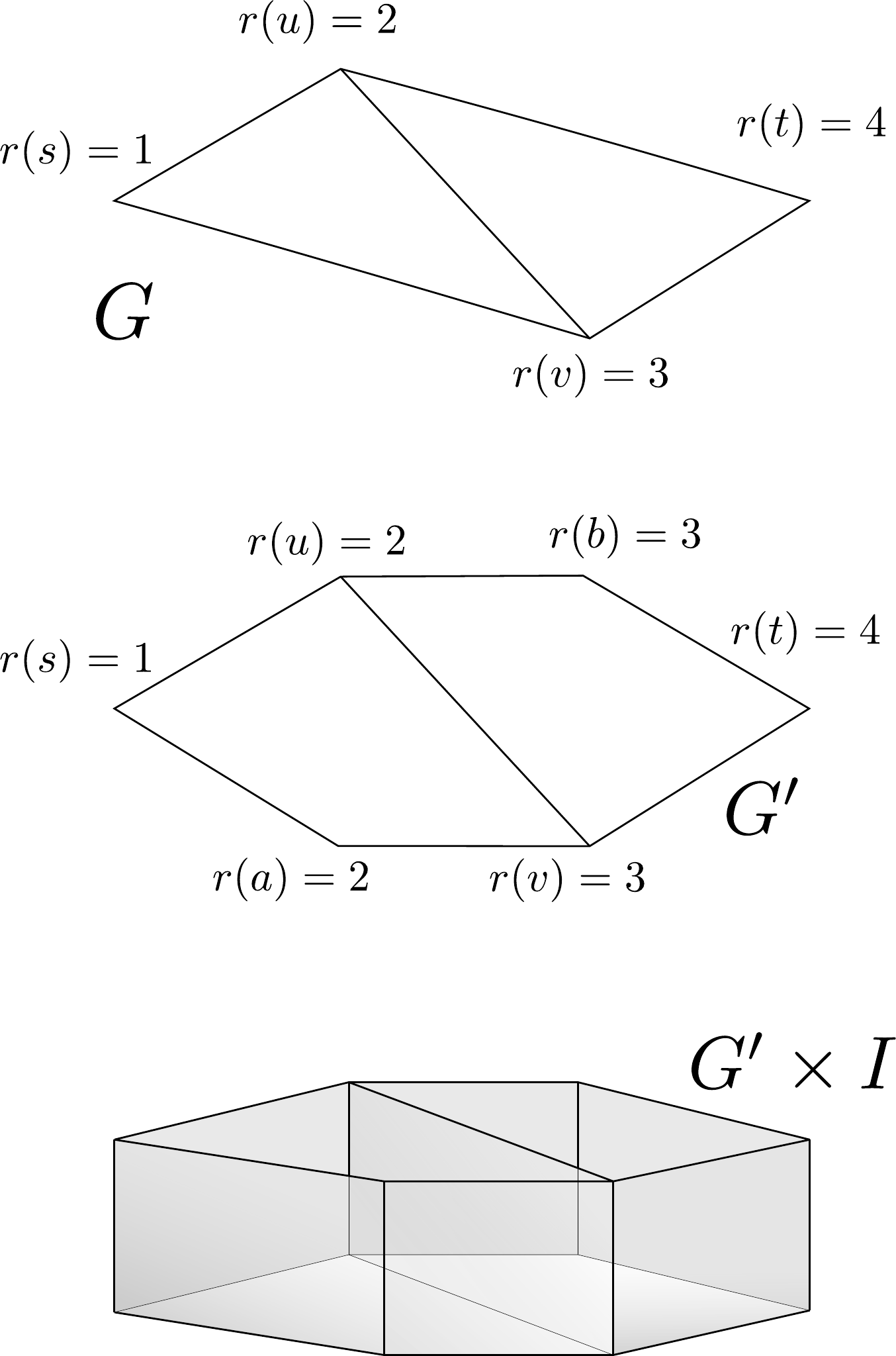}
\end{center}
\caption{$G$ is the instance of the 1-dimensional Network Reliability problem with an ordering $r$ placed on the vertices.
This map can be extended by linear interpolation over each edge.
We then subdivide the edges at all points of $r \inv (\{1,\cdots,n\})$.
Finally, we define the complex $X = G'\times I / \sim$ where $(x,z) \sim (x',z')$ iff $r(x)=r(x')$
and either $z=z'=1$ or $z=z'=0$.  }
\label{fig:graphexample}
\end{figure}

Set the probability of
failure of vertices that were added to the centers of the rectangles
equal to the probability of failure of the edge of $G'$ from which they arose.
Notice that failure of one of these vertices
leads to removal of the interior of the corresponding rectangle.

It is obvious that each path in $G$ gives rise to a fundamental class in $H_2(X,Y)$.
To prove the opposite, first recall that since $X$ has no $3$-simplices, each fundamental
class in $H_2(X,Y)$ has a unique representative cycle ($B_2(X,Y) = 0$).
Furthermore, since we are using homology with $\Z_2$ coefficients,
each class is simply a subset of the set of $2$-simplices in $X$.
A relative cycle $\alpha \in Z_2(X,Y)$ has the added property that an even number of
$2$ simplices in $\alpha$ contain any edge of $X-Y$
and a fundamental class must have $\partial \alpha$ equal to the sum of all of
the simplices of $Y$.
Notice also that if $\alpha$ contains any  2-simplex from a rectangle, it must contain all
four 2-simplices from that rectangle,
so it is equivalent to think of $\alpha$ as a set of rectangles from before the subdivision.

Our conclusion now follows easily.
Every rectangle in $\alpha$ determines a unique edge of $G'$.
Since $\FF$ is covered exactly once, no two edges of $G'$ that
are equivalent under $\sim$ can occur as edges of rectangles in $\alpha$.
Since $\alpha$ is a relative cycle, each vertical edge must lie on either
$0$ or $2$ rectangles, so the edges patch together to give a path from
$s$ to $t$.
Hence there is a one to one correspondence between the two sets.

Since we set up each rectangle to have an equal probability to that of its corresponding edge,
the probability that a fundamental class is still in $H_2(X,Y)$ is equal to the probability that
the corresponding path in $G$ is still functioning.
Thus, if we could compute the probability of failure in $X$ in a reasonable time frame, the solution would
give the probability of failure in $G$ in a reasonable time frame.
Since the latter problem is \#P-complete, our 2-dimensional version is also \#P-complete.

\end{proof}

%%%%%%%%%%%%%%%%%%%%%%%%%%%%%%%%%%%%%%%%%%%%
%%%%%%%%%%%%%%%%%%%%%%%%%%%%%%%%%%%%%%%%%%%%

\section{A Deterministic Algorithm}\label{S: Deterministic Algorithm}

Now that we know that the general problem is \#P-complete,
we strive to find ways to work around the computational complexity issues.
We will first show that, given a set of sensors which is relatively small
or at the very least relatively sparse in the domain, we can write a deterministic
algorithm to compute the probability of failure of the system.
In section \ref{S: Monitored System}, we will  consider the modified
problem of predicting failure as sensors in the system fail.

%%%%%%%%%%%%%%%%%%%%%%%%%%%%%%%%%%%%%%%%%%%%
%%%%%%%%%%%%%%%%%%%%%%%%%%%%%%%%%%%%%%%%%%%%

\subsection{The Hasse Diagram}\label{S: Hasse}

Consider a set of sensors $\chi$ in $\Delta $.
Recall that edges are added when  sensors are within $r_b$ of each other,
and 2-simplices are added wherever all three edges have already been included.
%As mentioned in Section \ref{S: 2-skeleton}, this is actually the 2-skeleton of the full Rips complex.
%An example can be seen in Figure \ref{fig: rips complex example}.

Consider all possible subsets $A \subset \chi_{int}$ and
as before construct the Rips complex $\RR_A$,
the largest subcomplex of $\RR$ which does not utilize the nodes in $A$.
The collection of  these Rips complexes forms a poset under inclusion,
where $A \subset B$ gives the reverse inclusion $\RR_B \subset \RR_A$.
Arrange all of these Rips complexes into a Hasse diagram,
as shown in Figure \ref{fig: Inclusion Diagram} for the example in
Figure \ref{fig: rips complex example}.
Here we place $\RR_A$ in the row indexed by the number of elements in $A$,
and we have shaded all the complexes $\RR_A$ which fail.
A line is drawn between $\RR_A$ and $\RR_B$ if $A$ is obtained from $B$
by removing a vertex.

\begin{figure}[h]
\begin{center}
\includegraphics[scale=.9]{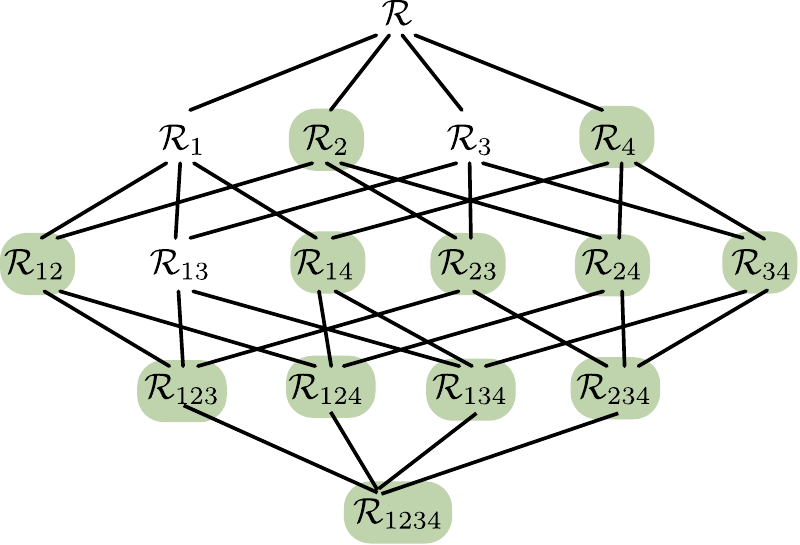}
\end{center}
\caption{Hasse Diagram.  A Rips complex $\RR_A$ is placed on a row according to the size of $A$ and lines are drawn to show inclusion between complex is neighboring rows.}\label{fig: Inclusion Diagram}
\end{figure}

In Figure \ref{fig: Inclusion Diagram} notice that, if $\RR_B$ fails the dS-G criterion, the Rips
complexes for all its supersets of $B$ do as well, so all its successors are also shaded.
This means that when searching for failures we do not have to check every possible subset.
Using breadth first search from $\RR$, we only need check complexes where all the predecessors are cake sets.
From this search pattern, if it is necessary to check the set in the first place
then it is not just a death set but a minimal death set.
This means that there is no post processing needed to determine the list of minimal death sets.

Given this setup, we now consider the probability of failure of the dS-G criterion.

%%%%%%%%%%%%%%%%%%%%%%%%%%%%%%%%%%%%%%%%%%%%
%%%%%%%%%%%%%%%%%%%%%%%%%%%%%%%%%%%%%%%%%%%%

\subsection{Probability of Failure} \label{S: Probability of Failure}

Let $X_i$ be a random variable which gives the time of death of node $v_i$.
In many cases, $X_i$ will be an exponential random variable,
but this has no effect on our result so we make no such assumption.
We do however assume that $X_i$ and $X_j$ are independent for $i \neq j$.

Let $S_A$ be the random variable which gives the first time at which all nodes in the set have
failed,
clearly $S_A=\max\{X_1, X_2, \cdots, X_\sizeset\}$.
Because the failures of the nodes are independent events, we have
\begin{align*}
%F(t)=
\P(S_A\leq t) &= \P(\max\{X_1,X_2,\cdots, X_\sizeset\}\leq t)\\
%	&= \P(X_1\leq t, X_2 \leq t, \cdots, X_\sizeset \leq t)\\
	&= \P(X_1\leq t)\P(X_2\leq t)\cdots \P(X_\sizeset \leq t).%\\
%	&= (1-e^{-\lambda t})^k.
\end{align*}
Next, let $\DD = \{A_1, \cdots, A_\death \}$ be the collection of all death sets, not necessarily minimal.
Let $C$ be the random variable which gives the time of failure of the dS-G criterion for the system.
The value of $C$ gives the first time that all of the nodes in one of the $A_i$ have failed,
i.e. $C=\min\{S_{A_1},\cdots, S_{A_\death}\}$.
Hence the probability that the system has failed the dS-G criterion by time $t$ is given by
\begin{equation*}
	\P(C\leq t)=\P(\min\{S_{A_1},\cdots,S_{A_\death}\}\leq t).
\end{equation*}
Unfortunately, these events are not independent since many of the death sets have non-trivial intersections.
On the other hand, we can organize $\DD$ using the concept of death chains.

A \textbf{death chain} is a sequence
$A_1 \subset A_2 \subset \cdots \subset A_q$ of death sets $A_i \in \DD$.
A \textbf{maximal death chain} is a death chain which cannot be increased in length, either by inserting any
intermediate set between elements of the chain, or adding any sets to either end of the chain.
Note that such a chain starts with a minimal death set,
ends with $A_\chiint = \chi_{int}$, and the number of elements increases by exactly one going from
$A_i$ to $A_{i+1}$.
If we read off the Hasse diagram as in Figure \ref{fig: Inclusion Diagram},
we see that a maximal chain is a path which goes from a minimal death set to the bottom of the diagram.

We will call a maximal death chain with $A$ as the minimal element an \textbf{$A$-chain}.
This allows consideration of all possible maximal chains from the complex $\RR$,
grouped by first element: in our case $A_1$-chains through $A_\mindeath$-chains.
Using this organization of the death sets gives the following theorem:

\begin{thm}
Let $\{A_1,\cdots,A_\mindeath\}$ be the set of minimal death sets for the Rips complex $(\RF)$.
Then the probability that the complex has failed by time $t$ is equal to

	\begin{equation*}
		\P(\textrm{\emph{Failure by time }}t)= \P\left(\min_i\{S_{A_i} \} \leq t\right).
	\end{equation*}
	
\end{thm}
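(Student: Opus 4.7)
The plan is to reduce the event $\{C \leq t\}$ (system failure by time $t$) to the event that \emph{some} minimal death set has had all of its nodes fail by time $t$. The starting point is the equivalence
$$\{C\leq t\} = \bigcup_{A \in \DD}\{S_A \leq t\},$$
which is just a restatement of the definition of $C$ given in the text. The goal is to show that this union collapses to a union over only the minimal death sets $A_1,\dots,A_\mindeath$.

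First I would observe that if $A \in \DD$ is any death set, then $A$ must contain at least one minimal death set $A_i$: indeed, either $A$ is itself minimal, or else by definition some proper subset of $A$ lies in $\DD$, and iterating (on a finite set) produces a minimal element of $\DD$ contained in $A$. Conversely, Lemma~\ref{L:map} tells us that every superset (within $\chi_{int}$) of a minimal death set is again a death set. Thus the collection $\DD$ is precisely the up-set generated by $\{A_1,\dots,A_\mindeath\}$ in the poset of subsets of $\chi_{int}$.

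Next I would use the elementary observation that $S_A$ is increasing in $A$: if $A_i \subseteq A$ then $S_{A_i} \leq S_A$, since the maximum of a subset is bounded by the maximum of a superset. Combining this with the previous step, for any $A \in \DD$ there is some $A_i$ with $A_i \subseteq A$ and hence $\{S_A \leq t\} \subseteq \{S_{A_i}\leq t\}$. The reverse containment is immediate since each $A_i$ is itself a death set. Therefore
$$\bigcup_{A \in \DD}\{S_A \leq t\} \;=\; \bigcup_{i=1}^\mindeath \{S_{A_i}\leq t\} \;=\; \Bigl\{\min_{1\le i\le\mindeath} S_{A_i} \leq t\Bigr\},$$
which is the desired identity after taking probabilities.

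There is no real obstacle here: the argument is essentially a purely combinatorial reduction, and the only content beyond set manipulation is the monotonicity of $S_A$ in $A$ together with Lemma~\ref{L:map}, which guarantees that the family of death sets is upward-closed. Independence of the $X_i$ is not needed for this particular statement; it only becomes relevant if one wants to compute the right-hand side explicitly (for instance via inclusion-exclusion on the $S_{A_i}$).
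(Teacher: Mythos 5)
Your proof is correct and is essentially the same argument as the paper's: both reduce the union over all death sets to the union over minimal death sets using that $\DD$ is upward-closed (Lemma~\ref{L:map}) and that $S_A$ is monotone under inclusion. The paper packages the monotonicity step via its ``death chain'' formalism, whereas you state it directly, but the underlying reasoning is identical; your remark that independence of the $X_i$ is not needed here is also accurate.
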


\begin{proof}
If $\DD$ is the set of all death sets, some set $B$ in $\DD$ must have failed in order to cause failure of the dS-G criterion.
Every set in $\DD$ contains one of the minimal death sets $A_i$,
so every set in $\DD$ is in at least one $A_i$-chain.

It is obvious that given any $A_i$-chain $\Ch$, where the time at which any of the death sets in the chain have failed is given by $Y$, satisfies
\begin{equation*}
 \P\left( Y \leq t \right) = \P(S_A\leq t).
\end{equation*}
From this we can also conclude that any set of $A$-chains $\{\Ch^1,\cdots,\Ch^k\}$, where each respective time of failure is given by $Y^1,\cdots,Y^k$, satisfies
\begin{equation*}
 \P\left( \bigcup_j \{Y^j \leq t \}\right) = \P(S_A\leq t).
\end{equation*}

Therefore, instead of asking for failure of the dS-G criterion, we can then ask for the time when at least one of the
sets in at least one of the chains has failed.
Hence
	\begin{align*}
		\P(\textrm{Failure by time }t)
			&=\P\left(\bigcup_i\{Y_{A_i}\leq t\}\right)\\
			&=\P\left(\bigcup_i\{S_{A_i}\leq t\}\right)\\
			&=\P\left(\min_i\{S_{A_i}\}\leq t\right).
	\end{align*}

\end{proof}

This means that the probability of failure of the system can be computed given the minimal death sets.
Notice that there is still work to be done since the minimal death sets may have intersections.
However, since we assume that we have a small number of sensors that are well distributed
and are not extremely dense in the domain,
the number of intersections will be small and therefore this later computation is feasible.
Thus, we seek an algorithm to compute the minimal death sets
although from our knowledge that the problem is \#P-complete,
we expect that this algorithm will be exponential in the worst case.

%%%%%%%%%%%%%%%%%%%%%%%%%%%%%%%%%%%%%%%%%%%%
%%%%%%%%%%%%%%%%%%%%%%%%%%%%%%%%%%%%%%%%%%%%

\subsection{Death Sets Algorithm}\label{S: Death Sets Algorithm}

When constructing an algorithm to determine the set of minimal death sets, the search space is the Hasse diagram described in Section \ref{S: Hasse}.
Since this has size $2^{|\chi_{int}|}$, we expect this is the source of our complexity issues.

Search the Hasse diagram using breadth first search.
This will exploit the property that if $B$ is a death set and $B \subset A$ then $A$ is also a death set.
Hence, to find the minimal death sets, we must only check complexes where every predecessor is still a cake set.
If we are forced to check all of the nodes of the Hasse diagram, then we will need to check $2^{|\chi_{int}|}$ complexes.
However, with our assumption that we do not have a very dense set of sensors, it should take removal of a small set of sensors in order to break the dS-G criterion.
This means the size of minimal death sets will be relatively small, and thus they will be close to the top of the Hasse diagram.
More importantly, it means there will be relatively few of them so out output size will not be too large.

Given this method to work through the sets, we need an efficient way
to check the dS-G criterion.
Consider the subcomplex $\RR_A$, thought of as the point in a filtration of
$\RR$ where all simplices have been added except those which have vertices in $A$.
Order the simplices so that those in the fence come first in the ordering.
Initial intuition says that in order to talk about failure of the dS-G criterion when the nodes in $A$ are removed,
we should filter $\RR$ so that all nodes, edges,
and triangles which have any vertex in $A$ are last in the ordering.
We could then construct the boundary matrix for this ordering,
cut off the final columns corresponding to simplices which would be gone if the vertices in $A$ failed,
and reduce the resulting matrix in order to read off the homology of $(\RR_A,\FF)$.

However, this turns out to be much more work than is needed.
If we add all degree $1$ and degree 0 simplices from the start,
even if they have interior nodes which we assume
to have failed,  the failure of the dS-G criterion is not affected.
The following expresses this and is elementary to prove.
\begin{lemma}
Let $Y$ be a 1-dimensional simplicial complex,
$X$ a 2-dimensional simplicial complex whose vertex set may intersect nontrivially with $Y$, and $\FF\subset X$, then
$H_2(X,\FF) \iso H_2(X \cup Y,\FF)$.
	\label{L: Extra Frame Stuff}
\end{lemma}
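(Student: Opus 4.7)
The statement says that gluing a purely $1$-dimensional complex $Y$ onto our $2$-dimensional complex $X$ cannot change the relative second homology with respect to $\FF$. The plan is to reduce this to a chain-level triviality using dimension: since $Y$ has no simplices of dimension $\geq 2$, it cannot contribute either to the $2$-chains or to the $3$-chain boundaries that would create relations among them.

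The cleanest route I would take is the long exact sequence of the triple $\FF \subset X \subset X \cup Y$:
\begin{equation*}
\cdots \to H_3(X\cup Y, X) \to H_2(X,\FF) \to H_2(X\cup Y, \FF) \to H_2(X\cup Y, X) \to \cdots
\end{equation*}
I would then argue that $H_k(X\cup Y, X) = 0$ for $k \geq 2$. On the chain level, $C_k(X \cup Y, X) = C_k(X\cup Y)/C_k(X)$ is generated by $k$-simplices of $X \cup Y$ that do not lie in $X$; any such simplex must lie in $Y$. Because $Y$ is $1$-dimensional, there are no such simplices for $k \geq 2$, so the relative chain groups vanish identically in those dimensions, and with them the relative homology. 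Exactness then pinches $H_2(X,\FF) \to H_2(X\cup Y,\FF)$ between two zero groups, forcing it to be an isomorphism.

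As a sanity check, one can verify the same conclusion directly: $C_2(X\cup Y,\FF) = C_2(X,\FF)$ because no new $2$-simplices are introduced, and there are no $3$-chains in either complex, so both $H_2$ groups are just the kernel of $\partial_2$ acting on the same group of $2$-chains. The kernel condition is that $\partial_2\alpha$ lie in $C_1(\FF)$, and this condition is intrinsic to the chain $\partial_2\alpha$ and independent of whether we view it inside $C_1(X)$ or inside the larger $C_1(X\cup Y)$. There is no genuine obstacle here; the only thing to be slightly careful about is remembering that the triple exact sequence requires both $H_2$ and $H_3$ of $(X\cup Y, X)$ to vanish, which is why the dimension hypothesis on $Y$ (rather than merely $\dim Y \leq \dim X$) is essential.
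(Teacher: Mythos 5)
Your argument is correct. The paper itself gives no proof of this lemma (it is dismissed as ``elementary to prove''), so there is nothing to compare against; either of your two routes fills the gap. The long exact sequence of the triple $\FF \subset X \subset X\cup Y$ works exactly as you say, since every simplex of $X\cup Y$ not in $X$ lies in the $1$-dimensional complex $Y$, forcing $C_k(X\cup Y,X)=0$ and hence $H_k(X\cup Y,X)=0$ for $k\ge 2$; and your direct chain-level check (identical relative $2$-cycles, no $3$-chains on either side) is the ``elementary'' verification the authors presumably had in mind.
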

%\begin{proof}
%Since $C_2(X) = C_2(X \cup Y)$, $C_2(X)/C_2(F) = C_2(X\cup Y)/C_2(F)$.
%Since $C_3(X)=C_3(X\cup Y)=0$, $H_2(X,\FF) = \ker\{C_2(X)/C_2(\FF)\rightarrow C_1(X)/C_1(\FF)$,
%while $H_2(X\cup Y)=\ker\{C_2(X\cup Y)/C_2(\FF) \rightarrow C_1(X \cup Y)/C_1(\FF)\}$.
%As the domains are the same and the maps are the same, the kernels are the same.
%Thus $H_2(X,\FF) \iso H_2(X\cup Y,\FF)$.
%\end{proof}

This implies that the time in the filtration when we add the 1-simplices is irrelevant to the homology group we are interested in,
namely $H_2(\RF)$.
Thus, we can order our filtration so that all the 2-simplices are at the end
and consider the failure of a node as the failure only of the 2-simplices which contain it as a face.

Given this filtration, we reduce the matrix $D$ via the persistence algorithm (see, e.g., \cite{Edelsbrunner2010}),
and consider the rightmost columns, which correspond to 2-simplices.
If the row for a simplex has no lowest one in a row below those corresponding
to the fence simplices, the addition of that 2-simplex creates a new class in $H_2(\RR,\FF)$.
If this column is not entirely 0 and has a lowest 1 in a row corresponding to a fence simplex,
that class has a boundary which is nonzero in $H_1(\FF)$.
Thus, our dS-G criterion reduces to looking for a column corresponding to a 2-simplex which has a
lowest one in a row corresponding to a fence simplex.

This shows that we can quickly determine  whether a complex satisfies the dS-G criterion  once we have
determined the correct filtration for $\RR_A$ and have reduced the matrix $D$.
We would like to not have to rewrite and re-reduce the matrix $R$ for each complex $\RR_A$ to be checked.
So, let us determine an efficient way to swap all the 2-simplices that
have a vertex in our failure set $A$ to the end.
For this, we turn to \cite{Cohen-Steiner2006}, which gives an algorithm to quickly update and
maintain the properties of $R$ and $U$, where $D=RU$, as we swap columns (Notice that $U=V\inv$ from the earlier discussion in Section \ref{S: Persistent Homology}).
This means that we do not have to  rerun the persistence algorithm each time to reduce the matrix $D$.

Let $D=RU$ be an $RU$-decomposition.
That is, $R$ is the reduced matrix, and $U$ is upper triangular.
Let $P$ be the matrix which swaps rows $i$ and $i+1$, so that $PDP$ is the boundary matrix with simplices
$\sigma_i$ and $\sigma_{i+1}$ switched.
This can be written as $PDP=(PRP)(PUP)$, so we need to determine when $PRP$ is not reduced and
$PUP$ is not upper triangular.
Notice that $PRP$ is not reduced if and only if there are columns $k$ and $l$ with
$\low_R(k)=i$, $\low_R(l)=i+1$, and $R[i,l]=1$. $PUP$ is not upper triangular if and only if $U[i,i+1]=1$.

In \cite{Cohen-Steiner2006}, cases are split into whether $\sigma_i$ and $\sigma_j$ are positive or negative.
The one case that will not occur for us is the possibility that $\sigma_i$ and $\sigma_{i+1}$ are positive,
and there are rows $k$ and $l$ with $\low_R(k)=i$ and $\low_R(l)=i+1$.
This would imply that $\sigma_l$ and $\sigma_k$ are 3-simplices whose additions kill the classes
born by the addition of $\sigma_i$ and $\sigma_{i+1}$.
As we are assuming $\RR$ is a 2-dimensional simplicial complex, this case is impossible, hence we can disregard it.
With respect to the other cases, we can either swap rows and columns in $R$ and $U$,
hence replace them with $PRP$ and $PUP$ with no issues, or we must replace them with $PRWPW$
and $WPWUP$, where $W$ is the matrix which adds column $i$ to column $i+1$ in $R$.
We replace $R$ and $U$ with $PRP$ and $PUP$ if
\begin{list}{$\diamondsuit$}{}
\item $\sigma_i$ and $\sigma_{i+1}$ are both positive simplices,
\item $\sigma_i$ and $\sigma_{i+1}$ are both negative simplices and $$U[i,i+1]=0,$$
\item $\sigma_i$ is negative, $\sigma_{i+1}$ is positive, and $$U[i,i+1]=0,$$% or
\item $\sigma_i$ is positive and $\sigma_{i+1}$ is negative.
\end{list}
We instead replace $R$ with $PRWPW$ and $U$ with $WPWUP$ if
\begin{list}{$\diamondsuit$}{}
\item $\sigma_i$ and $\sigma_{i+1}$ are both negative and $$U[i,i+1]=1$$ %or
\item $\sigma_i$ is negative, $\sigma_{i+1}$ is positive, and $$U[i,i+1]=1.$$
\end{list}
Hence,  the columns of  2-simplices which correspond to failure of specific vertices can be quickly swapped to the end of our matrix.
Now that matrix $R$ corresponding to the filtration placing the simplices in $A$ at the end has been reduced, we can check whether our complex $\RR_A$ passes the dS-G criterion.
In the language of persistence homology, the dS-G criterion is looking for a column which represents a positive simplex, and therefore a simplex which adds a new class to  $H_2(\RR_A,\FF)$.
Additionally, the boundary of this new class is nonzero in $H_1(\FF)$.
In terms of the matrices, we need to find a column $i$ in the reduced matrix $R$ which is   has a lowest 1 corresponding to a simplex in $\FF$.
A representation of this column is in Figure \ref{F: matrixRfundamental}.
\begin{figure}
\begin{center}
\includegraphics{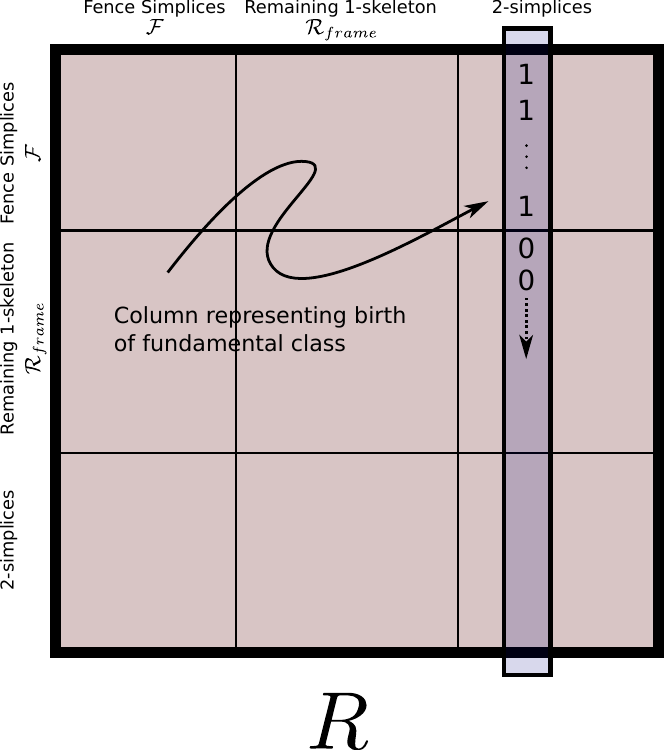}
\end{center}
\caption{The column representing the birth of a fundamental class. It corresponds to a 2-simplex and its lowest
$1$ corresponds to a simplex in $\FF$.} \label{F: matrixRfundamental}
\end{figure}
Our algorithm is as follows.% Algorithm \ref{alg:TheCakeIsALie}.

\begin{tabbing}
\\
\textbf{CakeOrDeath}($\RF$)\\
\\
Given: \= The boundary matrix $D$ with filtration \\
\> $\{$Fence, Remaining 1-simplices, 2-simplices $\}$ \\
Reduce $RU=D$\\
\textbf{for} \=$A \in$\=$ \chi_{int}$,  in the order of BFS in the Hasse diagram:\\
\>	Swap all columns corresponding to 2-simplices with\\
\>\>		a vertex in $A$ to end of matrix $D$, and maintain\\
\>\>		$R'$ and $U'$.\\
\>	\textbf{if} there is not a column as in Figure \ref{F: matrixRfundamental}:\\
\>\>		mark $A$ as `Minimal Death.'\\
\>	\textbf{else}:\\
\>\>		mark $A$ as `Cake.'\\
\>	\textbf{endif}\\
\textbf{endfor}
\end{tabbing}

%%%%%%%%%%%%%%%%%%%%%%%%%%%%%%%%%%%%%%%%%%%%
%%%%%%%%%%%%%%%%%%%%%%%%%%%%%%%%%%%%%%%%%%%%

\subsection{Complexity of Algorithm} \label{S: Complexity of Algorithm}
As a beginning aside, we point out why we chose breadth instead of depth first search.
BFS has the property that we will only ever check sets which are cake or minimal death.
On the other hand, DFS would require post-processing to determine which of the death sets  found were minimal death sets.
The perk of DFS, however, is that it requires less matrix swaps since multiple sets $A$ can be labeled as cake or death by reading off of one matrix.
As we wish to have less post-processing, we choose to use BFS for our algorithm.

Assume that the matrix $D$ is stored as a sparse matrix.
This is done with a linear array of lists $D[1,\cdots,m]$ where $m$ is the total number of simplices
in the 2-dimensional complex $\RR$.
Also, assume the ordering of the simplices is
$\left\{ \textrm{fence}, \textrm{vertices and remaining 1-simplices}, \textrm{2-simplices} \right\}$.
Each entry $D[i]$ in this array stores a linked list denoting the locations of the codimension-1 faces of $\sigma_i$, or equivalently, the 1s in column $i$ of the full matrix $D$.  This not only speeds up the operations required on the matrix, but reduces the storage size of $D$ to $O(m)$.

There are four major parts to the algorithm.
The first is to reduce $D=RU$.
Using the persistence algorithm, this takes time at most $O(m^3)$.
See \cite{Edelsbrunner2010} for details.

For each complex to be checked,  all necessary columns must be swapped to the right side of matrix $R$.
If there are $t$ 2-simplices, at worst there are $t^2/4$ swaps to be performed.
While each swap has at worst an $O(m)$ running time, from \cite{Cohen-Steiner2006}
there is an amortized time proportional to the number of 1s in the affected rows and columns, so this is $O(1)$.
This step therefore has an amortized cost of $O(t^2/4)$.

Next, we check for a fundamental class.
If this is done in $N$ complexes,  then $N \leq 2^{\chi_{int}-1}$.
(Recall that because we assume that the dense set of sensors is not dense, $N$ will likely be much smaller than $2^{|\chi_{int}|}$.)

If  a vector \texttt{Low} giving the location of the lowest 1 in each column is maintained throughout the process of swapping,  easily done via the cases in \cite{Cohen-Steiner2006}, it only takes $O(t)$ time to check for a column which fits our requirements.
Hence for each death set, the amortized running time is $O(t^2)$, and so the overall running time is $O(m^3+Nt^2)$.

To make this running time feasible, one needs to keep $N$ under control.
The easiest way to do this is to have a sparse set of sensors.
For example, suppose the area of the domain $\Delta$ is $R$ and we have $n$ sensors.
Each sensor covers an area of $\pi r_c^2$, so the sensors cover a total area (double counting overlap) of $n\pi r_c^2$.
This means the expected number of sensors covering any point is $x = n\pi r_c^2 / R$.
Therefore the death sets should be of size approximately $x$, so it is only necessary
to check complexes through about the $x^{\textrm{th}}$ row.
Thus $N \approx \binom{n}{1} +\cdots + \binom{n}{x}$.

In conclusion, computing the probability exactly will be easier if the set of sensors is sparse,
but what is gained in exactness of the computation is lost  in the robustness of the network.

%%%%%%%%%%%%%%%%%%%%%%%%%%%%%%%%%%%%%%%%%%%%%%%%%%%%%%%
%%%%%%%%%%%%%%%%%%%%%%%%%%%%%%%%%%%%%%%%%%%%%%%%%%%%%%%

\section{A Dynamic Algorithm for a Monitored System}\label{S: Monitored System}

Suppose we are in a situation where the deterministic algorithm is not feasible.
Computing the probability of failure exactly is an NP-hard problem,
and thus an exact computation is essentially impossible when the set of sensors is large.
Instead, assume that a central monitoring station receives information as to whether or not each sensor has failed.
In this case, a more practical question is to ask when  the system is getting close to failure and so we seek a dynamic algorithm to predict which nodes would cause failure of the criterion should they fail soon.
To do this, we will create a new criterion built from the old which will give an early warning for failure.
It essentially gives a flag on each interior vertex warning that its failure would probably cause failure of the dS-G criterion.

For technical reasons, we will assume  in this section  that the domain is convex.
A domain that is not convex can have a radius where the Rips complex has a nontrivial class in $H_1(\RR)$ even though $(\RF)$ passes the dS-G criterion.
This assumption is much stronger than is necessary since all we really need
is  that $H_1(\RR)=0$ whenever $(\RF)$ passes the dS-G criterion.

The main idea for the new criterion comes from the following theorem.
\begin{thm}\label{Thm: New Criterion}
 Assume that the pair of simplicial complexes $(\RF)$ passes the dS-G criterion, $H_2(\RR)=0$, and $w$ is a vertex in $\RR-\FF$.
Then $(\RR_w,\FF)$ passes the dS-G criterion if and only if $H_1(\Lk(w))=0$.
\end{thm}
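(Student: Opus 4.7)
The plan is to apply the Mayer--Vietoris sequence to the decomposition $\RR = \RR_w \cup \St(w)$, with $\RR_w \cap \St(w) = \Lk(w)$, and to combine it with the long exact sequences of the relative pairs. The closed star $\St(w)$ is a cone on $\Lk(w)$ and hence contractible, and $\Lk(w)$ is one-dimensional so $H_n(\Lk(w)) = 0$ for $n \geq 2$. Under the hypothesis $H_2(\RR) = 0$, together with $H_1(\RR) = 0$ (which, as the paper notes, follows from convexity of $\Delta$ combined with $(\RF)$ passing dS-G), the MV sequence collapses to an isomorphism $H_1(\Lk(w)) \cong H_1(\RR_w)$ induced by inclusion, and also yields $H_2(\RR_w) = 0$.

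Next, I would translate the dS-G criterion for $(\RR_w, \FF)$ into the language of $H_1$. From the long exact sequence of $(\RR_w, \FF)$, using $H_2(\RR_w) = 0$, the criterion is equivalent to the connecting map $H_2(\RR_w, \FF) \to H_1(\FF)$ being nonzero, which (since $H_1(\FF) = \Z_2$) is equivalent to $[\FF] = 0$ in $H_1(\RR_w)$. Transported along the MV isomorphism, this becomes: the image of $[\FF]$ in $H_1(\Lk(w))$ is zero. The backward direction of the theorem is then immediate, since $H_1(\Lk(w)) = 0$ forces the target to vanish, and indeed matches the conclusion of Lemma \ref{L:map} applied to the inclusion $\RR_w \subset \RR$: the fundamental class in $H_2(\RR, \FF)$ has a preimage in $H_2(\RR_w, \FF)$ which is again fundamental.

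For the forward direction I would use the triple long exact sequence of $(\RR, \RR_w, \FF)$ together with the excision isomorphism $H_2(\RR, \RR_w) \cong H_2(\St(w), \Lk(w)) \cong H_1(\Lk(w))$, and the uniqueness of the 2-chain $\gamma \in C_2(\RR)$ with $\partial \gamma = \FF$, which is forced by $Z_2(\RR) = 0$ (since $\RR$ is 2-dimensional and $H_2(\RR) = 0$). Assuming $(\RR_w, \FF)$ passes dS-G, Lemma \ref{L:map} combined with uniqueness of fundamental representatives in $H_2(\RR, \FF) = Z_2(\RR,\FF)$ produces $\gamma \in C_2(\RR_w)$, so $\gamma$ uses no triangle containing $w$. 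Consequently the image of $[\gamma]$ under $H_2(\RR, \FF) \to H_2(\RR, \RR_w) \cong H_1(\Lk(w))$, which is the ``link profile'' cycle $\mu$ defined by $w * \mu = \gamma|_{\St(w)}$, must be zero.

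The hardest step will be promoting $\mu = 0$ to $H_1(\Lk(w)) = 0$. The plan here is to invoke uniqueness of $\gamma$ a second time: any nonzero $\nu \in Z_1(\Lk(w))$ gives a 2-chain $w * \nu \in C_2(\RR)$ with $\partial(w*\nu) = \nu$, and a hypothetical $\eta \in C_2(\RR_w)$ with $\partial \eta = \nu$ would force $w*\nu + \eta \in Z_2(\RR) = 0$, which is impossible because $w * \nu$ contains $w$-simplices and $\eta$ does not. This rigidity step pins down how the fundamental class $[\gamma]$ relates to the MV generator of $H_1(\RR_w)$ and is the main geometric input needed to close the argument; I expect this to be where the subtlety of the proof really lives.
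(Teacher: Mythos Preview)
Your backward direction is fine and matches the paper's approach (via $H_1(\RR_w)\iso H_1(\RR)$ and the commuting ladder of long exact sequences). The gap is in the forward direction.

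You correctly reduce ``$(\RR_w,\FF)$ passes dS-G'' to $[\FF]=0$ in $H_1(\RR_w)$, and your Mayer--Vietoris gives $H_1(\Lk(w))\iso H_1(\RR_w)$. But the vanishing of one class does not make the whole group vanish. Your ``hardest step'' plan does not close this: the observation that a nonzero $\nu\in Z_1(\Lk(w))$ cannot bound in $\RR_w$ (else $w*\nu+\eta\in Z_2(\RR)=0$, impossible) is exactly the injectivity of the MV map you already have. It yields $[\nu]\neq 0$ in $H_1(\RR_w)$ but gives no relation between $[\nu]$ and $[\FF]$, hence no contradiction. Without extra input the implication is actually false at the level of abstract complexes: let $\RR$ be a triangulated disk $D$ with $\FF=\partial D$, and glue a cylinder $C$ along an interior circle of $D$. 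Then $H_2(\RR)=0$, $(\RR,\FF)$ passes dS-G via the chain $D$, and for $w$ an interior vertex of $C$ (so $w\notin D$) the pair $(\RR_w,\FF)$ still passes via the same $D$; yet $\Lk(w)\iso S^1$ has $H_1\neq 0$.

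What the paper does---and what you already have available---is apply the convexity hypothesis to $\RR_w$, not just to $\RR$. The standing assumption in this section is precisely that $H_1$ vanishes whenever the dS-G criterion holds, and $\RR_w$ is itself the Rips complex on $\chi\setminus\{w\}$ in the same convex domain. So once $(\RR_w,\FF)$ passes dS-G you get $H_1(\RR_w)=0$ directly, and then Mayer--Vietoris (with $H_2(\RR)=0$) forces $H_1(\Lk(w))=0$. There is no chain-level subtlety here; the convexity assumption is doing real work for $\RR_w$, and your attempt to bypass it cannot succeed.

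A minor aside: your appeal to Lemma~\ref{L:map} in the backward paragraph points the wrong way---that lemma pushes fundamental classes forward along inclusions, it does not produce preimages. Your actual argument there (via $H_1(\RR_w)=0$ and exactness) is correct on its own and does not need the lemma.
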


\begin{proof}
%($\rightarrow$)
Assume that $(\RR_w,\FF)$ passes the dS-G criterion and that the domain is convex, which
implies that $H_1(\RR_w)$ is trivial.
Mayer-Vietoris  for $\RR = \RR_w \cup \overline{\St}( w)$ gives the exact sequence
\begin{equation*}
 \xymatrix{
  H_2(\RR) \ar[r] &  H_1(\Lk (w)) \ar[r] & H_1(\RR_w) \oplus H_1(\overline{\St} (w,\RR))% \ar[r] & H_1(\RR)
}
\end{equation*}
Thus, since $H_2(\RR)$, $H_1(\RR_w)$ and $H_1(\overline{\St} (w,\RR))$ are all  trivial,  $H_1(\Lk (w)) = 0$.

Assume on the other hand that $H_1(\Lk(w))=0$.
Note that if $(\RF)$ passes the dS-G criterion, $\widetilde{H}_0(\Lk(w)) = 0$.
Using the long exact sequence of the pair $(\RR_w,\Lk(w))$, we have
\begin{equation*}
\xymatrix{
 %\widetilde{H}_1(\Lk(w,\RR)) \ar[r]  &
0 \ar[r] &\widetilde{H}_1(\RR_w) \ar[r] & \widetilde{H}_1(\RR_w,\Lk(w,\RR)) \ar[r] & 0
%\widetilde{H}_0(\Lk(w,\RR))
}
\end{equation*}
hence the middle map is an isomorphism.

By excision, $H_1(\RR_w,\Lk(w,\RR)) \iso H_1(\RR,\overline{\St}(w,\RR))$.
Using the knowledge that $\overline{\St}(w,\RR)$ has the homotopy type of a point for the first isomorphism and the long exact sequence of the pair $(\RR,\cdot)$ for the second, we have
\begin{equation*}
 H_1(\RR,\overline{\St}(w,\RR)) \iso H_1(\RR,\cdot) \iso H_1(\RR).
\end{equation*}
Thus, $H_1(\RR_w) \iso H_1(\RR)$.

Consider the diagram
\begin{equation*}
 \xymatrix{
H_2(\RR_w,\FF) \ar[r] \ar[d] & H_1(\FF) \ar[r] \ar[d]^\iso & H_1(\RR_w) \ar[d]^\iso\\
H_2(\RR,\FF) \ar[r] & H_1(\FF) \ar[r] & H_1(\RR)
}
\end{equation*}
Since $(\RF)$ passes the dS-G criterion, there is a fundamental class $\alpha$ in $H_2(\RF)$.
By definition, it maps to $\partial \alpha$ which is nonzero in $H_1(\FF)$, and since the two rows are exact, $\partial \alpha$ maps to 0 in $H_1(\RR)$.
The last two vertical maps are isomorphisms, so there must be a nonzero $\beta$ in $H_2(\RR_w,\FF)$ which maps to $\partial \alpha$ under the top left horizontal map, and therefore $(\RR_w,\FF)$ passes the dS-G criterion.

\end{proof}

Notice that the assumption that $H_2(\RR)=0$ is only used  for one direction of the theorem: if $(\RR_w,\FF)$ passes, then we have a link with trivial first homology.
Despite being counterintuitive, it is possible for a set of points in the plane to have a non-trivial second homology group \cite{Chambers2009}.
The expectation is that this event will not be frequent, but it must be kept in mind as we create a new criterion in this monitored set up.

\subsection{The new criterion and complexity}

Given Theorem \ref{Thm: New Criterion}, we propose a new criterion to complement the de Silva - Ghrist criterion:
\begin{defn}[Link Condition]
 If an interior vertex $w$ has $H_1(\Lk(w))\neq 0$, we say it is flagged.  Otherwise, we say it is not flagged.
\end{defn}
The idea is that if a vertex is flagged, there is a chance its removal will cause failure of the dS-G criterion.  If it is not flagged, then its removal can do no harm.

With this definition in mind, we give a dynamic algorithm to follow as nodes fail:

\begin{tabbing}
\\
\textbf{MonitoredSystemFailure}($\RF$)\\
\\
Given: Simplicial complex pair $(\RF)$\\
\\
Check \= dS-G criterion 
 (We assume that this initial \\ \>check will always pass)\\
Compute link of each vertex $w$ and $H_1(\Lk(w))$\\
\textbf{if} \={$H_1(\Lk(w)) = 0$}:\\
\>	Mark $w$ as flagged.\\
\textbf{endif}\\
\\
\textbf{if} \=ver\=tex \=$v$ fails:\\
\>		\textbf{if} {$v$ is flagged}:\\
\>\>			Update matrix $R$ to remove dead simplices\\
\>\>			Check dS-G criterion\\
\>\>			 \textbf{if} $(\RR_w,\FF)$ fails the dS-G criterion:\\
\>\>\>				 Break\\
\>\>			\textbf{endif} \\
\>		\textbf{endif} \\
\>		Update links of vertices\\
\>		Compute $H_1(\Lk(w))$ for $w$ whose link has \\ \>\>changed\\
\>		Mark or unmark $w$ as flagged according to $H_1$ \\ \>\>computation.\\
\textbf{endif} \\		
\\
\end{tabbing}

This algorithm turns out to be polynomial in the number of simplices in the worst case.
We will split the complexity computation into two parts: the initialization step, done before any vertex has failed, and the time taken for the algorithm for each failed vertex.

\subsubsection*{Initialization}

Let $m$ be the number of two simplices of dimension $\leq 2$.  In section \ref{S: 2-skeleton}, we showed that this is the highest dimension simplex needed for the dS-G criterion, and since the link condition only looks at the first dimension of the complex, nothing above the second dimension is required.

The complexity of computing the link of $w$ is directly related to the number of simplices containing $w$ as a vertex.
In fact, given a list of all simplices in $\RR$ which include vertex $w$, print the simplex obtained by removing vertex $w$ from the simplex.
This is the link, so given a listing of the simplices with vertex $w$, the link of $w$ can be computed in time $O(k)$ where $k$ is the number of adjacent simplices.
Since $k$ is obviously less than $m$, the time to initially compute all the links is $O(mn)$.

The time to compute $H_1(\Lk(w))$ is $O(k^3)$, again with $k$ equal to the number of simplices in the link of $w$.  Since, $k<m$, the time taken for this initial step is $O(nm^3)$.
Thus, the entirety of the initialization step takes time $O(m^3+mn+nm^3) = O(nm^3)$.

\subsubsection*{Failed Vertex}

In the worst case, every failed vertex is flagged and so the dS-G criterion must be recomputed each time.
As seen in section \ref{S: Complexity of Algorithm}, updating the matrix $R$ and checking the dS-G criterion takes time $O(t^2)$ where $t$ is the number of two simplices.

What is interesting about the link condition is that it is easy to maintain the links of all the interior vertices.
  Let $\Lk(\sigma,X)$ be the link of $\sigma$ in the simplicial complex $X$ and let $X_w$ be the largest subcomplex of $X$ without the vertex $w$.

\begin{lemma}
For any vertices $v,w$ in a simplicial complex $X$,
\begin{equation}\label{E: Link Condition}
 \Lk(v,X_w) = \Lk(v,X) \cap X_w.
\end{equation}
\end{lemma}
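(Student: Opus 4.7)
The plan is to establish the set equality $\Lk(v,X_w) = \Lk(v,X)\cap X_w$ by direct double containment, working straight from the definitions of the link and of the deletion subcomplex. Recall that $\sigma \in \Lk(v,Y)$ means $v \notin \sigma$ and $\sigma \cup \{v\}$ is a simplex of $Y$, while $\sigma \in X_w$ means $\sigma$ is a simplex of $X$ with $w \notin \sigma$. I will assume throughout that $v \neq w$; the case $v=w$ is degenerate because $v$ is not a vertex of $X_w$, making the left side empty while the right side equals $\Lk(v,X)$, so this case should be excluded from the hypothesis (and indeed, in the intended application $v$ is an interior vertex whose link is being updated after $w$ has failed).

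For the forward inclusion, I would take $\sigma \in \Lk(v, X_w)$. Then $v \notin \sigma$ and $\sigma \cup \{v\}$ is a simplex of $X_w$. Since $X_w$ is a subcomplex of $X$, the simplex $\sigma \cup \{v\}$ also lies in $X$, giving $\sigma \in \Lk(v,X)$. Moreover $\sigma$ itself is a face of $\sigma \cup \{v\}\in X_w$, so $\sigma \in X_w$, establishing membership in the intersection on the right.

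For the reverse inclusion, I would take $\sigma \in \Lk(v,X) \cap X_w$. The first condition gives $v \notin \sigma$ and $\sigma \cup \{v\} \in X$; the second gives $w \notin \sigma$. The only thing to verify is that $\sigma \cup \{v\}$ remains in $X_w$, i.e. that $w$ is absent from this enlarged simplex. Here the standing assumption $v \neq w$ is exactly what is needed: combined with $w \notin \sigma$ it yields $w \notin \sigma \cup \{v\}$, hence $\sigma \cup \{v\} \in X_w$ and $\sigma \in \Lk(v,X_w)$.

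The main (minor) obstacle is thus bookkeeping the role of $w$: one must be careful to extend the condition $w \notin \sigma$ to $w \notin \sigma \cup \{v\}$, which forces the side condition $v \neq w$. Aside from this, the lemma is purely formal and should require only a few lines. Its utility, which I expect will be exploited in the subsequent discussion, is that when a vertex $w$ fails, the link of any other vertex $v$ need only be intersected with $X_w$ rather than recomputed from scratch.
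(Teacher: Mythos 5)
Your proof is correct and takes essentially the same double-containment approach as the paper's own proof. Your explicit attention to the side condition $v \neq w$ is in fact slightly more careful than the paper's argument, which passes from $w \notin \sigma$ to $\langle \sigma, v\rangle \in X_w$ without comment.
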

\begin{proof}
 If $\sigma \in \Lk(v,X_w)$ then obviously $\sigma \in X_w$.
 Also, this implies that the simplex $\tau = <\sigma,v> \in X_w$.
 Since $\sigma< \tau$ and $v \not \in \sigma$, we must have that $\sigma \in \Lk(v,X)$, so $\Lk(v,X_w) \subset \Lk(v,X) \cap X_w$.

 Let $\sigma \in \Lk(v,X) \cap X_w$. Since it is in $\Lk(v,X)$, the simplex, $<\sigma,v> \in X$.
 As $\sigma$ is also in $X_w$, $w \not \in \sigma$, so $<\sigma,v> \in X_w$.  Therefore, $\sigma \in \Lk(v,X_w)$, and equation \ref{E: Link Condition} follows.
\end{proof}

This lemma implies that the only  update needed after the failure of a vertex is to delete any simplices in the link which were also deleted in the simplicial complex.
In the worst case, the link of every vertex must be updated, the first homology recomputed, and the flag remarked as needed.
Since the size of each link is at most $m$, this step takes $O(n(m+m^3)) = O(nm^3)$.

If this sequence of events happens for every $n$, this second part of the algorithm takes time $$O\bigg(n(t^2+t+mn+m^3n) \bigg) = O(m^3n^2).$$
Combining this with the initializing step, the whole algorithm takes at worst time $O(m^3n^2)$, so is polynomial in the number of simplices.

% %%%%%%%%%%%%%%%%%%%%%%%%%%%%%%%%%%%%%%%%%%%%
% %%%%%%%%%%%%%%%%%%%%%%%%%%%%%%%%%%%%%%%%%%%%

\section{Conclusions and Possible Extensions}\label{S: Conclusion}

In this paper, we have extended the problem posed by de Silva and Ghrist in \cite{DeSilva2006}
by assuming that  sensors have a probability of failure, and asking for the probability of failure of the dS-G criterion for
coverage.
We determined that the generalized version of the problem is \#P-complete, and thus it is unlikely that there is an
 algorithm to answer this question in general which runs in a reasonable amount of time.
Finally, we provided a deterministic algorithm which does work in the case of a small set of sensors,
and a method to predict failure when the system is larger but is being monitored.

The obvious immediate extension of our work is to determine whether the the version of the problem
posed by de Silva and Ghrist in \cite{DeSilva2007}, which allows for higher dimensions and looser
boundary conditions, is also amenable to an application of probability of failure.
We conjecture that this extended problem will also be NP-hard.

In the long term, we would like to see more applications of computational topology to the
design and analysis of sensor networks.
Since we can make such strong conclusions with such weak assumptions on the capabilities of the sensors,
we expect that such applications are abundant.

%%%%%%%%%%%%%%%%%%%%%%%%%%%%%%%%%%%%%%%%%%%%
%%%%%%%%%%%%%%%%%%%%%%%%%%%%%%%%%%%%%%%%%%%%

\bibliographystyle{authordate1}	%munich
\bibliography{FailureFiltrationLibrary}

\end{document}